\documentclass[aps,prx,reprint,twocolumn,
 nofootinbib,showkeys]{revtex4-2}
\usepackage{mathtools,amsthm,amssymb,diffcoeff}
\usepackage{stmaryrd,pifont}
\usepackage{enumitem,multirow,booktabs}
\usepackage{xspace}

\usepackage{hyperref}
\usepackage{cleveref}

\newcommand{\PPT}{PPT\xspace}
\newcommand{\NISQ}{NISQ\xspace}
\newcommand{\iid}{i\@.i\@.d\@.\xspace}
\newcommand{\NHW}{NHW\xspace}
\newcommand{\UTNPW}{UTNPW\xspace}
\newcommand{\NPW}{NPW\xspace}
\newcommand{\EW}{EW\xspace}
\newcommand{\numset}[1]{\mathbb{#1}}
\newcommand{\normaldist}{\ensuremath{\mathcal{N}(0,1)}}

\DeclareMathOperator{\Tr}{Tr}

\theoremstyle{remark}

\newtheorem{Thrm}{Theorem}
\newtheorem{Cor}{Corollary}
\newtheorem{Rem}{Remark}
\crefname{Def}{Definition}{Definitions}
\crefname{Lemma}{Lemma}{Lemmas}
\crefname{Thrm}{Theorem}{Theorems}
\crefname{Cor}{Corollary}{Corollaries}
\crefname{Rem}{Remark}{Remarks}
\crefname{equation}{Eq.}{Eqs.}
\crefname{section}{Sec.}{Secs.}
\crefname{enumi}{Step}{Steps}

\begin{document}
\title{From Certifying Rank $\boldsymbol{k}$ Projectors To Non-Positivity,
 Entanglement, And Non-Hermitian Witnesses Through Traces Of Matrix Powers}

\author{H. F. Chau}
\email[email: ]{hfchau@hku.hk}
\affiliation{Department of Physics, University of Hong Kong, Pokfulam Road,
 Hong Kong}
\date{\today}

\keywords{certification of Hermitian matrices, entanglement witness,
 non-Hermitian witness, non-positivity witness, numerical stability,
 trace of matrix powers}

\begin{abstract}
 We all know that a density matrix $\rho$ is pure if and only if $\Tr \rho^2 = 1$.  But is this the only way to prove the purity of $\rho$ using the trace of its powers?  Here I systematically study the necessary and sufficient conditions of the more general question of guaranteeing that all eigenvalues of a Hermitian matrix belong to a specified set through its trace of powers.  More importantly, these characterization results are automatically witnesses certifying a quantum state as entangled, a Hermitian operator has least one negative eigenvalue and a linear operator as non-Hermitian.  I demonstrate the effectiveness of these witnesses, analyze their performance, and study their strength, weakness together with resource requirement through numerical simulation as well as analytical work.  I also discuss briefly the effects of numerical stability, uncertainty in measurement and rounding errors on these problems.
\end{abstract}

\maketitle

\section{Introduction}
\label{Sec:Intro}
 It is well-known that a density matrix $\rho$ is a rank one projector (in other words, a pure state) if and only if $\Tr\rho^2 = 1$.  However, this result is no longer true if $\rho$ is replaced by a Hermitian matrix $H$ of dimension greater than two.  In this case, it is known that a necessary and sufficient condition for $H$ to be a rank one projector is $\Tr H = \Tr H^2 = \Tr H^3 = 1$~\cite[{See, for example, Ref.}][]{SchumacherWestmoreland}.  These results mean that the values for a finite set of real numbers (in this case, the real eigenvalues of a matrix) can sometimes be fixed up to permutation through a constant number of constraints independent of the size of the set (here in the form of the first two or three moments of elements in the set).  Here I generalize the above results in multiple ways.  My objectives are to investigate various generalizations and applications of these results.

 I emphasize here that using the trace of matrix powers in certification is not a pure mathematical exercise, at least for Hermitian operators as well as density matrices and their positive partial transposes (\PPT{s}).  It is a useful tool in diverse fields such as certifying entanglement~\cite{entanglement_certification_review1,entanglement_certification_review2,Renyi_entropy_estimation}, entanglement detection~\cite{Renyi_entropy_estimation,entanglement_detection_review1,entanglement_detection_review2}, studying entanglement in condensed matter systems~\cite{Renyi_entropy_estimation,CM_review1,CM_review2}, deducing eigenvalues of a quantum state~\cite{entanglement_spectroscopy1,entanglement_spectroscopy2,SLLJ25} and estimating nonlinear function of quantum states~\cite{SLLJ25}.  Besides, various resource-efficient algorithms to compute or estimate the power of Hermitian matrices or density matrices are available in the literature.  Some of them can be used to show entanglement in existing experiments~\cite{EKetal20}.  Others can be readily implemented using \NISQ devices~\cite{GTC21,SY21,SLLJ25}.  Recently, an experiment on using second moment of the partial transpose to detect entanglement was also announced~\cite{PPT_detection}.

 Is it possible to prove the purity of a density matrix through traces of its higher powers?  Can the above results be extended to rank~$k$ projectors?  Can the non-zero eigenvalues of a Hermitian matrix $H$ be narrow down to, say, two possible values by specifying the traces of just five different powers of $H$?  And how about using only four different powers of $H$?  More generally, what is the most economical way to do so measured in terms of the number of such trace of powers?  These are some of the questions that I am going to answer in \Cref{Sec:trace_powers}.  Precisely, I am going to report a few necessary and sufficient conditions for a complex-valued matrix to be Hermitian as well as for a Hermitian matrix (whose trace may or may not be specified) to be non-negative both with at most $p$ distinct real eigenvalues through specifying a few trace of matrix powers.  Moreover, I prove an explicit expression on the minimum number of such trace of matrix powers needed.  In addition, practical numerical analysis and related approximation issues will also discussed in \Cref{Sec:trace_powers}.  Actually, the closest prior arts to this work are the paper by De las Cuevas \emph{et al.}~\cite{CFN20} who studied to how determine if a Hermitian matrix is positive semidefinite given the trace of its first few moments plus the papers by Elben \emph{et al.}~\cite{EKetal20}, Yu \emph{et al.}~\cite{YIG21} and Neven \emph{et al.}~\cite{Netal21} on forming entanglement witnesses (\EW{s}) through several moments of the \PPT of a density matrix.  I will compare their approaches with mine in \Cref{Sec:trace_powers}.

 Then I turn the results reported in \Cref{Sec:trace_powers} into various different type of witnesses.  They include certifying a complex-valued matrix is not Hermitian as well as guaranteeing a Hermitian matrix to have at least one negative eigenvalue.  Through the famous \PPT criterion, I also obtain a number of \EW{s}.  These witnesses together with their performance will be reported in \Cref{Sec:witness}.  In particular, I find that for uniformly and randomly generated matrices, specifying the values of trace of up to eight different integral powers, the corresponding witnesses perform better than all the published results I aware of.  Actually, my results generalize some of those in Refs.~\cite{EKetal20,YIG21,Netal21}.  In \Cref{Subsubsec:non-positive_witness_performance}, I further discuss the reasons why these witnesses are so effective.  Besides, I report the common features for the failure cases there, too.  Here I remark that showing a matrix is non-Hermitian through its trace of matrix powers is more than a pure theoretical study.  By means of weak measurement or interferometry techniques, expectation value of a non-Hermitian operator can be determined~\cite{trace_non-Hermitian}.  Actually, such experiment has been carried out~\cite{trace_non-Hermitian_expt}.  Finally, conclusions are drawn in \Cref{Sec:conclusion}.

\section{Constraining Eigenvalues Of Hermitian Matrices Through Trace Of Their
 Powers}
\label{Sec:trace_powers}

\subsection{General Hermitian Matrices}
\label{Subsec:trace_powers_general}

\begin{Thrm}
 Let $H$ be a Hermitian matrix and $C_H = \{ c_i \}_{i=1}^p$ be a set of $p$ distinct real numbers.  Then
 \begin{equation}
  \smashoperator{\sum_{j=0}^{\deg g(x) + \sum_i n_i}} \quad a_j \Tr H^j \ge 0
  \label{E:Hermitian_eigens}
 \end{equation}
 where $a_j$ is the coefficient of the $x^j$ term in the polynomial
 \begin{equation}
  G_H(x) \equiv g(x) \prod_{i=1}^p (c_i - x)^{n_i}
  \label{E:G_H_def}
 \end{equation}
 for all $j$.  Here, $g(x)$ is a polynomial obeying $g(x) > 0$ for all $x\in \numset{R}$, and $n_i$ are positive even integers for all $i$.  Furthermore, Inequality~\eqref{E:Hermitian_eigens} becomes an equality if and only if eigenvalues of $H$ are in the set $C_H$.  And in this case, $c_i$ is an eigenvalue of $H$ of multiplicity $k_i$ for all $i$ if and only if $\Tr H^j = \sum_i k_i^{\vphantom{j}} c_i^j$ for all $j = 0,1,2,\dots,p-1$.
 \label{Thrm:Hermitian_eigens}
\end{Thrm}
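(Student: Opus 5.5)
The argument rests on the spectral decomposition of $H$. Since $H$ is Hermitian, write its eigenvalues (with multiplicity) as real numbers $\lambda_1,\dots,\lambda_N$. Then $\Tr H^j = \sum_{m=1}^N \lambda_m^j$ for every $j\ge 0$, with the convention $\Tr H^0 = N$. Because the trace is linear, the left-hand side of Inequality~\eqref{E:Hermitian_eigens} equals
\begin{equation}
 \sum_j a_j \Tr H^j = \Tr G_H(H) = \sum_{m=1}^N G_H(\lambda_m).
\end{equation}
The whole statement therefore reduces to properties of the real polynomial $G_H$ evaluated at the real points $\lambda_m$.

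Non-negativity is immediate. Each $n_i$ is even, so $(c_i-x)^{n_i}\ge 0$ for real $x$, and $g(x)>0$ by assumption; hence $G_H(\lambda_m)\ge 0$ for every $m$. For the equality case, a sum of non-negative terms vanishes if and only if every term vanishes. Since $g$ never vanishes on $\numset{R}$, we have $G_H(\lambda_m)=0$ exactly when $\lambda_m=c_i$ for some $i$. So equality holds if and only if every eigenvalue lies in $C_H$.

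For the final claim, assume all eigenvalues lie in $C_H$, and let $k_i\ge 0$ be the multiplicity of $c_i$, with $\sum_i k_i = N$. The forward direction is immediate: $\Tr H^j=\sum_i k_i c_i^j$ for all $j$. For the converse, suppose $\Tr H^j=\sum_i k_i' c_i^j$ for $j=0,\dots,p-1$, where $k_i'$ are the prescribed multiplicities. Subtracting the two expressions gives
\begin{equation}
 \sum_{i=1}^p (k_i-k_i')\, c_i^j = 0, \qquad j=0,1,\dots,p-1.
\end{equation}
This is a homogeneous linear system whose coefficient matrix is the $p\times p$ Vandermonde matrix $(c_i^j)$. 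That matrix is invertible because the $c_i$ are distinct, so $k_i=k_i'$ for all $i$.

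The only mildly delicate point is bookkeeping rather than a real obstacle. The identification $\Tr H^0=N$ must be used explicitly, and it is what makes the $j=0$ equation fix $\sum_i k_i$. Multiplicities $k_i=0$ must also be allowed, since $C_H$ need not be the exact spectrum.
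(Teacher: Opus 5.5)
Your proposal is correct and follows essentially the same route as the paper: you rewrite the left-hand side as $\sum_m G_H(\lambda_m)$, whose terms are non-negative and vanish exactly on $C_H$, and then use the invertibility of the Vandermonde matrix $(c_i^j)$ to fix the multiplicities. Your explicit handling of $\Tr H^0 = N$ and of zero multiplicities makes precise what the paper leaves implicit.
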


\begin{proof}
 Clearly, $G_H(x) \ge 0$ with equality holds if and only if $x \in C_H$.
 Let $\lambda_j$'s be the eigenvalues of $H$.  Then the first part of this
 Theorem follows from the observation that $\sum_j G_H(\lambda_j) \ge 0$ with
 equality holds if and only if $\lambda_j \in C_H$ for all $j$.

 As for the second part, note that the Vandermonde matrix $(c_i^j)_{ij}$ is
 invertible.  Hence, there is an one-to-one correspondence between
 $(k_i)_{i=1}^p$ and $(\Tr H^j)_{j=0}^{p-1}$.
\end{proof}

\begin{Rem}
 By putting $n_i = 2$ and $g(x) = 1$, \Cref{Thrm:Hermitian_eigens} implies
 that $(2p+1)$ constraints, namely, the values of $\Tr H^j$ for $j=0,\dots,2p$
 are enough to assure that a Hermitian matrix has $p$ distinct non-zero
 eigenvalues $c_i$'s each with multiplicity $k_i > 0$.  In the event that one
 of the eigenvalues is $0$, it needs $(2p-1)$ constraints through the values
 of $\Tr H^j$ for $j=2,\dots,2p$.  In other words, the resource needed to
 certify the eigenvalues of a highly degenerate Hermitian matrix is low.  This
 finding echos that by Shin \emph{et al.} who discovered that $\Tr \rho^j$ for
 a density matrix $\rho$ and $j$ is a large integer can be well estimated by
 the trace of a few low powers of $\rho$ provided that its rank is
 small~\cite{SLLJ25}.  Actually, these two results above are optimal in the
 sense that one cannot nail down the $p$ distinct non-zero eigenvalues of $H$
 and their multiplicities by specifying the values of $\Tr H^{q_1}, \dots, \Tr
 H^{q_{2p}}$ for $q_j \in \numset{N}$.  The reason is that for the function
 $f(x_1,\dots,x_{2p+1}) = (\sum_{i=1}^{2p+1} x_i^{q_1}, \dots,
 \sum_{i=1}^{2p+1} x_i^{q_{2p}})$, $\{ \diffp{f}{x_i} \}_{i=1}^{2p+1}$ is a
 linearly dependent set at every point $(x_1,\dots,x_{2p+1})$.  So, $f$ is
 not injective and hence using the values of $\Tr H^{q_j}$ for $j=1,\dots,2p$
 are insufficient to determine the set of eigenvalues of $H$ and their
 multiplicities.
 \label{Rem:optimal_number_of_constraints}
\end{Rem}

 The followings are immediate and important consequences of
 \Cref{Thrm:Hermitian_eigens}.

\begin{Cor}
 Regarding $a_j$ in \Cref{Thrm:Hermitian_eigens} as a function of $c_i$'s, a
 Hermitian matrix $H$ has up to $p$ distinct eigenvalues if and only if
 $\displaystyle \inf_{c_1,\dots,c_p \in \numset{R}} \sum_j a_j \Tr H^j = 0$.
 \label{Cor:Hermitian_eigens}
\end{Cor}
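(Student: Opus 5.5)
The plan is to apply \Cref{Thrm:Hermitian_eigens} pointwise in $(c_1,\dots,c_p)$, with $g$ and the even integers $n_i$ held fixed. Write $\lambda_1,\dots,\lambda_n$ for the eigenvalues of $H$. Then
\begin{equation*}
 F(c_1,\dots,c_p) \equiv \sum_j a_j \Tr H^j = \sum_{l=1}^n G_H(\lambda_l) = \sum_{l=1}^n g(\lambda_l) \prod_{i=1}^p (c_i-\lambda_l)^{n_i} .
\end{equation*}
By \Cref{Thrm:Hermitian_eigens}, $F \ge 0$ for every choice of the $c_i$'s, so the infimum is at least $0$. One small point: the theorem asks for the $c_i$'s to be distinct. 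The formula above is a nonnegative continuous function on all of $\numset{R}^p$, so the infimum is the same whether or not we allow coincident $c_i$'s. I will therefore work on $\numset{R}^p$ and only use distinctness where it matters.

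For the ``only if'' direction, suppose $H$ has $q \le p$ distinct eigenvalues $\mu_1,\dots,\mu_q$. Choose $c_i = \mu_i$ for $i \le q$, and pad with $p-q$ further distinct real numbers $c_{q+1},\dots,c_p$ not equal to any $\mu_i$. The spectrum of $H$ is then contained in $C_H$. The equality clause of \Cref{Thrm:Hermitian_eigens} gives $F = 0$ at this point, so the infimum is $0$ and is actually attained.

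For the ``if'' direction, which is the only real obstacle, assume the infimum is $0$. We must rule out the case where it is approached only in the limit and never attained. I will use coercivity. Since $g > 0$ on $\numset{R}$ and every $n_i \ge 2$, each term of $F$ satisfies
\begin{equation*}
 g(\lambda_l)\prod_i (c_i-\lambda_l)^{n_i} \ge \Bigl(\min_l g(\lambda_l)\Bigr) \prod_i (c_i-\lambda_l)^{n_i} .
\end{equation*}
If some $|c_i| \to \infty$ while the other factors stay bounded away from $0$, then $F \to \infty$. Factors tending to $0$ are harmless, because a $c_i$ that is near an eigenvalue is also bounded.

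To make this cleaner, I will argue directly with a minimizing sequence $c^{(m)}$ with $F(c^{(m)}) \to 0$. Each term of $F$ is nonnegative, so for every $l$ the product $\prod_i (c^{(m)}_i - \lambda_l)^{n_i}$ tends to $0$. Hence $\min_i |c^{(m)}_i - \lambda_l| \to 0$ for each $l$. So for large $m$, every eigenvalue $\lambda_l$ lies within $\varepsilon$ of some $c^{(m)}_i$. Taking $\varepsilon$ smaller than half the minimal gap between distinct eigenvalues, each $c^{(m)}_i$ can be close to at most one distinct eigenvalue. By pigeonhole, the number of distinct eigenvalues is at most $p$. This handles boundedness automatically, since $c_i$'s far from the spectrum are simply not used. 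Alternatively, one can pass to a convergent subsequence of the relevant $c_i$'s and invoke the equality case of \Cref{Thrm:Hermitian_eigens} at the limit, after adjusting any coincident limit points to be distinct without losing coverage.
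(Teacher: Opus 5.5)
Your proof is correct and takes essentially the same approach as the paper, which states this corollary without proof as an immediate consequence of the equality clause of \Cref{Thrm:Hermitian_eigens}. Your minimizing-sequence and pigeonhole argument supplies a step the paper leaves implicit: an infimum of $0$ forces at most $p$ distinct eigenvalues even if the infimum is not attained. The preliminary coercivity paragraph is unnecessary and can be dropped. In fact $F$ need not be coercive: if $H$ has fewer than $p$ distinct eigenvalues, you can place $c_1,\dots,c_{p-1}$ on the spectrum and send $c_p\to\infty$ while $F$ stays $0$. Your final argument does not depend on it.
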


\begin{Cor}
 Suppose $H$ is a Hermitian matrix satisfying $\Tr H^\ell = c^\ell k$, $\Tr H^m = c^m k$ and $\Tr H^n = c^n k$ where $\ell,m,n,k \in \mathbb{Z}^+$, $\ell > m > n$ and $0 \ne c \in \numset{R}$.  Then, $H = c \,\Pi$ where $\Pi$ is a rank $k$ orthogonal projector if and only if $\ell,n$ are even and $m$ is odd.
 \label{Cor:rank_k_extension}
\end{Cor}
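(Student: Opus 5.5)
The plan is to rescale and compare the three given trace conditions with a single polynomial inequality, in the spirit of \Cref{Thrm:Hermitian_eigens}, except that the polynomial will be a weighted combination of $x^\ell$, $x^m$, $x^n$ rather than a product of squares. Put $H' = H/c$, so the hypotheses become $\Tr H'^\ell = \Tr H'^m = \Tr H'^n = k$. The claim $H = c\,\Pi$ is then equivalent to saying that the eigenvalues of $H'$ lie in $\{0,1\}$ and that exactly $k$ of them equal $1$.

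\emph{Sufficiency} ($\ell,n$ even and $m$ odd). Set $\alpha = (m-n)/(\ell-n)$ and $\beta = 1-\alpha$, so that $\alpha,\beta\in(0,1)$ and $\alpha\ell + \beta n = m$. Define
\begin{equation*}
 f(x) = \alpha x^\ell + \beta x^n - x^m .
\end{equation*}
I will show $f(x)\ge 0$ on $\numset{R}$, with equality exactly at $x\in\{0,1\}$, by splitting into three cases.
\begin{itemize}
\item For $x>0$, the weighted AM--GM inequality gives $\alpha x^\ell + \beta x^n \ge x^{\alpha\ell+\beta n} = x^m$. Equality holds iff $x^\ell = x^n$, that is, iff $x=1$.
\item For $x<0$, the terms $x^\ell$ and $x^n$ are positive because $\ell,n$ are even, and $-x^m>0$ because $m$ is odd. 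Hence $f(x)>0$.
\item For $x=0$ we have $f(0)=0$.
\end{itemize}
Summing $f$ over the eigenvalues $\lambda_j$ of $H'$ gives $\alpha k + \beta k - k = 0$. Since every term is non-negative, every $\lambda_j$ lies in $\{0,1\}$. Then $\Tr H'^n = k$ counts the eigenvalues equal to $1$, which shows there are exactly $k$ of them. Because $H'$ is Hermitian, it is therefore an orthogonal projector of rank $k$.

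\emph{Necessity}. For each parity pattern other than ($\ell,n$ even, $m$ odd), I must exhibit a Hermitian (diagonal) $H$ that meets the three trace conditions but is not $c\,\Pi$. The reason such examples should exist is that the argument above breaks only on negative eigenvalues. If $m$ is even, or if $\ell$ or $n$ is odd, then on the negative half-line no combination of $x^\ell,x^n,x^m$ that vanishes at $0$ and $1$ can have the required sign. Concretely, I will start from the spectrum consisting of $1$ with multiplicity $k$ plus zeros. I will then replace some of the zeros by a small negative eigenvalue $-\varepsilon$, possibly together with a small positive one; the parity mismatch lets the signs of the three resulting changes to $(\Tr H^\ell,\Tr H^m,\Tr H^n)$ be arranged as needed. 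Finally I will compensate for these changes by moving three eigenvalues $y_1,y_2,y_3$ away from $1$.

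The local solvability of this compensation comes from the implicit function theorem. The relevant Jacobian has rows $(\ell y_i^{\ell-1}, m y_i^{m-1}, n y_i^{n-1})$, and it is nonsingular at distinct positive $y_i$, being a generalized Vandermonde matrix with distinct exponents. Since the compensation starts from $y_i = 1$, where these rows coincide, I will first separate the three values (for instance by a small symmetric splitting of eigenvalues near $1$) and then apply the theorem.

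I expect this necessity direction to be the main obstacle. Two points need care. First, when $k<3$ there are not enough eigenvalues near $1$ to move, so the $y_i$ will have to include auxiliary eigenvalue pairs that are not near $1$; the easiest fallback is a small explicit multiset found separately for each parity case. Second, the perturbations must keep the traces exactly equal to $c^j k$. If the perturbative construction becomes messy, I will fall back on explicit counterexamples for each parity case, taking $c=1$ without loss of generality.
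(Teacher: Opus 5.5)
Your sufficiency half is correct and is essentially the paper's argument. Your $f(x)=\alpha x^\ell+\beta x^n-x^m$ is $\frac{\ell-m}{\ell-n}$ times the paper's $G_H(x)=x^n h(x)$. Using weighted AM--GM on $x>0$ plus the parity check on $x<0$ is a clean replacement for the paper's appeal to ``elementary calculus''.

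The gap is the necessity half: the statement you set out to prove there is false, so no construction can succeed. Take $(\ell,m,n)=(3,2,1)$, $k=1$, $c=1$, where $m$ is even and $\ell,n$ are odd.
\begin{itemize}
\item If $\Tr H^2=1$, then $|\lambda_i|\le 1$, so $\Tr H^3\le\sum_i|\lambda_i|^3\le\sum_i\lambda_i^2=1$.
\item Equality holds only if every non-zero eigenvalue equals $1$.
\item Hence $\Tr H^2=\Tr H^3=1$ alone already forces a rank-one projector.
\end{itemize}
This is the classical result quoted in the Introduction, and the case $m=1$ of \Cref{Thrm:rank_one_projector_iff2}. Likewise, \Cref{Thrm:rank_one_projector_iff1} shows that $(4,2,1)$ and $(5,4,2)$ characterize rank-one projectors when $k=1$. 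So the ``small explicit multiset'' you plan as a fallback for $k<3$ does not exist in these cases.

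Your perturbative mechanism also fails, for every $k$, whenever $n$ is even. Any spectrum $\{\lambda_i\}$ of $H/c$ meeting the three trace conditions obeys $\sum_i f(\lambda_i)=\alpha k+\beta k-k=0$.
\begin{itemize}
\item For even $n$, $f(x)=x^n\bigl(\beta+O(x)\bigr)>0$ for small $x\ne0$, on both sides of $0$.
\item At $x=1$, $f(1)=f'(1)=0<f''(1)$, by strict convexity of $t\mapsto t(t-1)$ and $m=\alpha\ell+\beta n$.
\end{itemize}
Thus $f>0$ on a punctured neighborhood of $\{0,1\}$. Any spectrum whose eigenvalues all lie near $\{0,1\}$ and that satisfies the constraints is therefore exactly $\{0,1\}$-valued. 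So your claim that the parity mismatch ``lets the signs be arranged as needed'' is wrong for the patterns (odd, even, even) and (odd, odd, even).

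The paper never proves a necessity statement of your kind. Its ``conversely'' is only the triviality $\Tr (c\Pi)^j=c^jk$. In effect it reads the corollary as follows: when $\ell,n$ are even and $m$ is odd, $H=c\Pi$ holds if and only if the three trace identities hold. The literal ``only if'' (that $H=c\Pi$ forces the parity pattern) is false as well, since $c\Pi$ satisfies the hypotheses for every choice of exponents. Under the paper's reading, your sufficiency argument plus that one-line remark is a complete proof. Your necessity section should be dropped rather than repaired.
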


\begin{proof}
 By substituting $H$ by $H/c$, one only needs to consider the case of $c = 1$.
 By elementary calculus, I know that the function $\displaystyle h(x) = 1 -
 \frac{\ell-n}{\ell-m} \ x^{m-n} + \frac{m-n}{\ell-m} \ x^{\ell-n} \ge 0$ for
 all $x \in \numset{R}$ with equality holds if and only if $x = 1$.  Besides,
 $x = 1$ is its double root.  Hence, by setting $G_H(x) = x^n h(x)$ and $C_H =
 \{ 0, 1 \}$, I conclude that $G_H(x) \ge 0$ with equality holds if and only
 if $x \in C_H$.  So, $\Tr H^j = k$ for all $j = \ell,m,n$ implies that
 Inequality~\eqref{E:Hermitian_eigens} is an equality.  So by
 \Cref{Thrm:Hermitian_eigens}, $H$ must be a rank $k$ orthogonal projector.
 Conversely, if $H$ is a rank $k$ projector, surely $\Tr H^j = k$ for all $j
 \in \numset{Z}^+$.  This completes the proof.
\end{proof}

\begin{Rem}
 \Cref{Cor:Hermitian_eigens} can be viewed as finding all the distinct
 eigenvalues of a Hermitian matrix $H$ by varying over the set $C_H$ so that
 the value of the LHS of Inequality~\eqref{Thrm:Hermitian_eigens} becomes $0$.
 Similar strategy works for finding eigenvalues of a non-negative Hermitian
 matrix through
 \Cref{Thrm:positive_eigens_for_unit_trace_H,Cor:positive_eigens} to be
 reported in \Cref{Subsec:trace_powers_non-negative} below.
 \label{Rem:determining_C_H}
\end{Rem}

\begin{Rem}
 \Cref{Cor:rank_k_extension} demonstrates that by properly chosen $g(x)$ and
 $n_i$'s in \Cref{Thrm:Hermitian_eigens}, one may greatly reduce the number of
 non-zero coefficients $a_j$ even though the degree of $G_H$ may be high.
 Physically, it means that the knowing the traces of a few possibly high
 powers of $H$, whose numbers can be as small as the optimal value given by
 \Cref{Rem:optimal_number_of_constraints} is enough to restrict its
 eigenvalues to a small set $C_H$.
\end{Rem}

 The two Theorems below show that using polynomial $G_H$ is not the only way
 to obtain necessary and sufficient conditions on eigenvalues of $H$ through
 specifying some of its trace powers.  In particular,
 \Cref{Thrm:rank_one_projector_iff2} generalizes the known result that a
 Hermitian matrix $H$ is a rank one projector if and only if $\Tr H = \Tr H^2
 = \Tr H^3 = 1$~\cite{SchumacherWestmoreland}.

\begin{Thrm}
 Suppose $H$ is a Hermitian matrix, $m$ is an odd positive integer and $n$ is an even positive integer.  Then,
 \begin{equation}
  (\Tr H^n)^2 \ge \Tr H^{2n} .
  \label{E:rank_one_projector_iff1_inequality}
 \end{equation}
 Moreover, suppose $c \in \numset{R}$ is non-zero.  Then, $H/c$ is a rank one
 projector if and only if $\Tr H^m = c^m$, $\Tr H^n = c^n$, $\Tr H^{2n} =
 c^{2n}$.
 \label{Thrm:rank_one_projector_iff1}
\end{Thrm}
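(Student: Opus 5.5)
Let $\lambda_1,\dots,\lambda_d$ be the (real) eigenvalues of $H$ and set $\mu_i = \lambda_i^n \ge 0$, which is non-negative because $n$ is even. Inequality~\eqref{E:rank_one_projector_iff1_inequality} is then just
\begin{equation*}
 \Bigl(\sum_i \mu_i\Bigr)^2 = \sum_i \mu_i^2 + \sum_{i\ne j} \mu_i\mu_j \ge \sum_i \mu_i^2 ,
\end{equation*}
since every cross term is non-negative. Equality holds if and only if $\mu_i\mu_j = 0$ for all $i \ne j$. In other words, equality holds exactly when at most one eigenvalue of $H$ is non-zero.

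For the ``if and only if'' part, the forward direction is immediate: if $H = c\,\Pi$ with $\Pi$ a rank one projector, then $\Tr H^j = c^j$ for every positive integer $j$. For the converse, assume $\Tr H^m = c^m$, $\Tr H^n = c^n$ and $\Tr H^{2n} = c^{2n}$.
\begin{enumerate}
 \item From $\Tr H^n = c^n$ and $\Tr H^{2n} = c^{2n}$ we get $(\Tr H^n)^2 = \Tr H^{2n}$, so equality holds in Inequality~\eqref{E:rank_one_projector_iff1_inequality}. By the equality analysis above, $H$ has at most one non-zero eigenvalue, say $\lambda$, and $H$ is unitarily diagonal.
 \item Since $c \ne 0$, we have $\lambda^n = c^n \ne 0$. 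Hence $\lambda \ne 0$, so there is exactly one non-zero eigenvalue and it has multiplicity one.
 \item Because $n$ is even, $\lambda^n = c^n$ only gives $\lambda = \pm c$. The odd power fixes the sign: $\lambda^m = c^m$ with $m$ odd forces $\lambda = c$, since $x \mapsto x^m$ is injective on $\numset{R}$.
 \item Therefore $H = c\,vv^\dagger$ for a unit vector $v$, so $H/c$ is a rank one projector.
\end{enumerate}

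There is no real obstacle in this argument. The only points that need care are the multiplicity count, which is why the equality case is stated as ``at most one non-zero eigenvalue'' and then upgraded to ``exactly one'' using $c \ne 0$, and the sign ambiguity, which the odd power $m$ removes.

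One could instead route the argument through \Cref{Thrm:Hermitian_eigens} with $C_H = \{0,c\}$. However, the direct Cauchy-type argument above is shorter and does not require constructing a non-negative polynomial $G_H$.
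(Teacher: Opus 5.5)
Your proof is correct and follows the paper's argument: the paper likewise writes $(\Tr H^n)^2 - \Tr H^{2n} = \sum_{i\ne j}\lambda_i^n\lambda_j^n \ge 0$, uses the equality case to get at most one non-zero eigenvalue, and then uses the odd power $m$ to fix that eigenvalue's sign. The only cosmetic difference is that the paper first rescales to $c=1$, whereas you keep $c$ general and state explicitly that $c\neq 0$ upgrades ``at most one'' to ``exactly one'' non-zero eigenvalue.
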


\begin{proof}
 Observe that $(\Tr H^n)^2 - \Tr H^{2n} = \sum_{i\ne j} \lambda_i^n
 \lambda_j^n \ge 0$, where $\lambda_i$'s are the eigenvalues of $H$.  This
 proves Inequality~\eqref{E:rank_one_projector_iff1_inequality}.

 To continue, by considering $H/c$, I may assume that $c = 1$.  Then $\Tr H^n
 = 1 = \Tr H^{2n}$ implies $\Tr H^{2n} = (\Tr H^n)^2$, which in turn implies
 $\sum_{i\ne j} \lambda_i^n \lambda_j^n = 0$.  As $n$ is even, this happens if
 and only if at most one of the eigenvalues, say $\lambda_1$, is non-zero.
 Finally, $\Tr H^m = 1$ with $m$ being odd forces $\lambda_1 = 1$.  In other
 words, $H$ is a rank one projector.  The converse is trivially true.
\end{proof}

\begin{Thrm}
 Let $H$ be a Hermitian matrix, $m$ is an odd positive integer.  Then
 \begin{equation}
  (\Tr H^{2m})^2 - 2 \Tr H^{3m} + \Tr H^{2m} \ge 0 .
  \label{E:rank_one_projector_inequality}
 \end{equation}
 Suppose $0\ne c \in \numset{R}$.  Then, $H/c$ is a rank one projector if and
 only if $\Tr H^m = c^m$, $\Tr H^{2m} = c^{2m}$, $\Tr H^{3m} = c^{3m}$.
 \label{Thrm:rank_one_projector_iff2}
\end{Thrm}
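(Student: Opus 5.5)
The plan is to mirror the proof of \Cref{Thrm:rank_one_projector_iff1} and work directly with the eigenvalues $\lambda_i$ of $H$. I set $\mu_i = \lambda_i^m$, which is real because $H$ is Hermitian. Then $\Tr H^{2m} = \sum_i \mu_i^2$ and $\Tr H^{3m} = \sum_i \mu_i^3$.

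Expanding the square, I expect the left-hand side of Inequality~\eqref{E:rank_one_projector_inequality} to split into a cross-term part and a diagonal part:
\begin{equation*}
 \Bigl(\sum_i \mu_i^2\Bigr)^2 - 2\sum_i \mu_i^3 + \sum_i \mu_i^2
 = \sum_{i\ne j} \mu_i^2 \mu_j^2 + \sum_i \bigl( \mu_i^4 - 2\mu_i^3 + \mu_i^2 \bigr) .
\end{equation*}
The diagonal summand factors as $\mu_i^2(\mu_i - 1)^2$. Hence the whole expression equals $\sum_{i\ne j} \mu_i^2\mu_j^2 + \sum_i \mu_i^2(\mu_i-1)^2$, which is a sum of non-negative terms. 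This proves the inequality.

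I will also record when equality holds: exactly when at most one $\mu_i$ is non-zero and every $\mu_i$ lies in $\{0,1\}$.

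For the characterization, I first reduce to $c = 1$ by replacing $H$ with $H/c$, using $\Tr (H/c)^j = c^{-j}\Tr H^j$.

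The converse direction is immediate: if $H$ is a rank one projector, then $\Tr H^j = 1$ for all $j \ge 1$.

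For the forward direction, assume $\Tr H^m = \Tr H^{2m} = \Tr H^{3m} = 1$. Then the left-hand side of Inequality~\eqref{E:rank_one_projector_inequality} equals $1 - 2 + 1 = 0$, so we are in the equality case. Therefore at most one $\mu_i$ is non-zero and each $\mu_i \in \{0,1\}$. Since $\sum_i \mu_i^2 = \Tr H^{2m} = 1$, exactly one $\mu_i$, say $\mu_1$, equals $1$, and the rest vanish.

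Because $m$ is odd, $\lambda_1^m = 1$ forces $\lambda_1 = 1$, and $\lambda_i^m = 0$ forces $\lambda_i = 0$ for $i \ge 2$. So $H$ is Hermitian with spectrum $\{1, 0, \dots, 0\}$, i.e.\ a rank one projector. The condition $\Tr H^m = 1$ is then automatically consistent; the oddness of $m$ is what does the real work.

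I do not expect a serious obstacle. The only step requiring care is spotting the factorization $\mu^4 - 2\mu^3 + \mu^2 = \mu^2(\mu-1)^2$ after separating the $i = j$ terms of $(\sum_i \mu_i^2)^2$. One must also remember that oddness of $m$ is needed to pass from $\mu_1 = 1$ back to $\lambda_1 = 1$; for even $m$, $\lambda_1 = -1$ would also be allowed.
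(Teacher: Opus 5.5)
Your proof is correct and is essentially the paper's. The paper uses the same decomposition $\sum_i \lambda_i^{2m}(\lambda_i^m-1)^2+\sum_{i\ne j}\lambda_i^{2m}\lambda_j^{2m}$ and the same equality analysis. It differs only in passing to $m=1$ via an $m$th root (you keep $\mu_i=\lambda_i^m$ and use oddness at the end), and in using $\Tr H^m=c^m$ rather than $\Tr H^{2m}=c^{2m}$ to get exactly one non-zero eigenvalue; your route thereby shows that the hypothesis $\Tr H^m=c^m$ is actually redundant for odd $m$.
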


\begin{proof}
 The Inequality~\eqref{E:rank_one_projector_inequality} holds because its RHS
 equals $\sum_i \lambda_i^{2m} (\lambda_i^m - 1)^2 + \sum_{i\ne j}
 \lambda_i^{2m} \lambda_j^{2m} \ge 0$, where $\lambda_i$'s are the eigenvalues
 of $H$.  Besides, the equality holds if and only if $\lambda_i \in \{ 0,1 \}$
 for all $i$ and at most one of the eigenvalues is non-zero.

 Since $m$ is odd and $H$ is Hermitian, $H^{1/m}$ is well-defined and
 Hermitian.  By considering $(H/c)^{1/m}$, I may assume that $c = m = 1$ from
 now on.  In this case, from the derivation of the previous paragraph, $\Tr H
 = \Tr H^2 = \Tr H^3 = 1$ together with $\Tr H = 1$ implies that exactly one
 of the eigenvalues of $H$ is $1$ while the rest are all $0$.  Hence, $H$ is a
 rank one projector.  The proof of the converse is straightforward.
\end{proof}

\subsection{Non-Negative Hermitian Matrices}
\label{Subsec:trace_powers_non-negative}

 Some proof techniques in \Cref{Subsec:trace_powers_general} can be readily
 applied to the case of eigenvalues determinations of non-negative Hermitian
 matrices.

\begin{Thrm}
 Let $C_{NN} = \{ 0, 1, c_1,\dots, c_p \}$ where $c_i \in (0,1)$ are distinct.  Suppose $H$ is a unit trace non-negative Hermitian matrix.  Then,
 \begin{equation}
  \smashoperator{\sum_{j=0}^{\substack{\deg g(x) + m_1 + \\ m_2 + \sum_i n_i}}}
  \enspace b_j \Tr H^j \ge 0
  \label{E:non-negative_Hermitian_eigens_inequality}
 \end{equation}
 where $b_j$ is the coefficient of $x^j$ in the polynomial
 \begin{equation}
  G_{NN}(x) = g(x) x^{m_1} (1-x)^{m_2} \prod_{i=1}^p (c_i - x)^{n_i}
  \label{E:G_NN_def}
 \end{equation}
 for all $j$ with $g(x)$ being a polynomial obeying $g(x) > 0$ for all $x\in [0,1]$, $m_1, m_2 \in \numset{N}$, and $n_i$ are even positive integers for all $i$.  Moreover, if $m_1, m_2 > 0$, then eigenvalues of $H$ are in $C_{NN}$ if and only if Inequality~\eqref{E:non-negative_Hermitian_eigens_inequality} is an equality.  And in this case, $c_i$ is an eigenvalue of $H$ of multiplicity $k_i$ for all $i$.  Besides $1$ is an eigenvalue of $H$ of multiplicity $k$ if and only if $\Tr H^j = k + \sum_i k_i c_i^j$ for $j = 0,1,2,\dots,p$.
 \label{Thrm:positive_eigens_for_unit_trace_H}
\end{Thrm}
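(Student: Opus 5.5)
I will follow the proof of \Cref{Thrm:Hermitian_eigens}, replacing positivity on all of $\numset{R}$ by positivity on $[0,1]$. First I will check that every eigenvalue $\lambda$ of $H$ lies in $[0,1]$: $H$ is non-negative, so $\lambda \ge 0$, and $\Tr H = 1$ with all eigenvalues non-negative forces $\lambda \le 1$. Next I will show that $G_{NN}(x) \ge 0$ on $[0,1]$. Each factor is non-negative there: $g(x)>0$ by assumption; $x^{m_1}\ge 0$ and $(1-x)^{m_2}\ge 0$ for any $m_1,m_2\in\numset{N}$, since $x$ and $1-x$ are both non-negative on $[0,1]$; and $(c_i-x)^{n_i}\ge 0$ because $n_i$ is even. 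Summing over the eigenvalues gives $\sum_j b_j \Tr H^j = \sum_\lambda G_{NN}(\lambda) \ge 0$, which is Inequality~\eqref{E:non-negative_Hermitian_eigens_inequality}. Here $\Tr H^0$ is the dimension of $H$, as in \Cref{Thrm:Hermitian_eigens}.

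For the equality clause with $m_1,m_2>0$, I will show that the zero set of $G_{NN}$ in $[0,1]$ is exactly $C_{NN}$. The point $0$ is a zero because $m_1>0$, the point $1$ because $m_2>0$, and each $c_i$ because $n_i>0$. Conversely, every other point of $[0,1]$ makes all factors strictly positive. Equality then holds if and only if $G_{NN}(\lambda)=0$ for every eigenvalue, since a sum of non-negative terms vanishes only when each term does; that is, if and only if every eigenvalue lies in $C_{NN}$. This is the only place where $m_1,m_2>0$ is used: if, say, $m_1=0$, then $0$ need not be a root, so the claimed equivalence would need modification.

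For the multiplicity clause, I assume all eigenvalues lie in $C_{NN}$. Let $k$ be the multiplicity of $1$, $k_i$ that of $c_i$, and $k_0$ that of $0$. For $j\ge 1$ we have $\Tr H^j = k + \sum_i k_i c_i^j$, because the zero eigenvalue contributes nothing; for $j=0$ the trace also counts $k_0$. I will therefore read the conditions for $j\ge1$ as a linear system in the unknowns $(k,k_1,\dots,k_p)$, noting that the statement's range $j=0,\dots,p$ has to be interpreted this way (or shifted to $j=1,\dots,p+1$). The coefficient matrix is $(x^j)$ with nodes $x\in\{1,c_1,\dots,c_p\}$, which are $p+1$ distinct non-zero numbers. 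Factoring out the non-zero powers turns it into a Vandermonde matrix in distinct nodes, which is invertible. This gives a one-to-one correspondence between the multiplicities and the traces, exactly as in the second part of \Cref{Thrm:Hermitian_eigens}. The unit-trace condition $k+\sum_i k_i c_i = 1$ is the $j=1$ equation, so it is automatically consistent.

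The main obstacle is only this bookkeeping of the index range and the role of the zero eigenvalue, not any substantive estimate. I will settle it by taking $j$ from $1$ to $p+1$, equivalently using $j=1,\dots,p$ together with the trace condition counted as given. The rest is routine.
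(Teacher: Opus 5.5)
Your proposal is correct and is essentially the paper's argument: eigenvalues lie in $[0,1]$, $G_{NN}\ge 0$ there with zero set exactly $C_{NN}$ when $m_1,m_2>0$, sum over the eigenvalues, then invert a Vandermonde system. Your observation that the $j=0$ equation ignores the zero eigenvalues, so the system should run over $j=1,\dots,p+1$, is a legitimate correction that the paper's ``same proof technique'' glosses over. One slip: since $\Tr H=1$ is itself the $j=1$ equation, ``$j=1,\dots,p$ together with the trace condition'' gives only $p$ equations for the $p+1$ unknowns $k,k_1,\dots,k_p$, so the equivalent formulation you want is $j=2,\dots,p+1$ together with $\Tr H=1$.
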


\begin{proof}
 Note that $G_{NN}(x) \ge 0$ for all $x\in [0,1]$.  In addition, if $m_1, m_2
 > 0$, then $G_{NN}(x) = 0$ if and only if $x \in C_{NN}$.  Since $H$ is
 Hermitian with unit trace, all its eigenvalues must lie in $[0,1]$ if $H \ge
 0$.  The same proof technique as that of \Cref{Thrm:Hermitian_eigens} can now
 be used to show the validity of this Theorem.
\end{proof}

\begin{Rem}
 By setting $G_{NN}(x) = x(1-x)$, \Cref{Thrm:positive_eigens_for_unit_trace_H}
 reduces to the well-known result that a density matrix $\rho$ is pure if and
 only if $\Tr \rho^2 = 1$~\cite{SchumacherWestmoreland}.  More generally, by
 using $G_{NN}(x) = x^{q_1} (1-x^{q_2})$ with $q_1, q_2 \in \numset{Z}^+$,
 \Cref{Thrm:positive_eigens_for_unit_trace_H} implies that $\rho$ is pure if
 and only if $\Tr \rho^{q_1} = \Tr \rho^{q_1 + q_2} = 1$. 
 \label{Rem:determining_C_NN}
\end{Rem}

\begin{Cor}
 Suppose one is given the same conditions and using the same notations as in
 \Cref{Thrm:positive_eigens_for_unit_trace_H} except that $\Tr H$ is not
 known and that $C_{NN}$ is replaced by $C'_{NN} = \{0, c_1, \dots, c_p \}$
 with $c_i > 0$ for all $i$.  Then, $H \ge 0$ implies
 \begin{equation}
  \smashoperator{\sum_{j=0}^{\substack{\deg g(x) + \\ m_1 + \sum_i n_i}}}
  \enspace b'_j \Tr H^j \ge 0
  \label{E:non-negative_Hermitian_eigens_alt}
 \end{equation}
 where $b'_j$ is the coefficient of $x^j$ in the polynomial
 \begin{equation}
  G'_{NN}(x) = g(x) x^{m_1} \prod_{i=1}^p (c_i - x)^{n_i}
  \label{E:G_NN_alt_def}
 \end{equation}
 for all $j$.  Besides, Inequality~\eqref{E:non-negative_Hermitian_eigens_alt} is an equality if and only if eigenvalues of $H$ are in $C'_{NN}$.  And in this case, $c_i$ is an eigenvalue of $H$ of multiplicity $k_i$ for all $i$ if and only if $\Tr H^j = \sum_i k_i c_i^j$ for $j = 0,1,2,\dots,p-1$.
 \label{Cor:positive_eigens}
\end{Cor}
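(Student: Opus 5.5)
I will mirror the proof of \Cref{Thrm:positive_eigens_for_unit_trace_H}, and hence that of \Cref{Thrm:Hermitian_eigens}. The only change is that, without the unit trace condition, the eigenvalues of $H \ge 0$ are known only to lie in $[0,\infty)$ rather than in $[0,1]$. The plan therefore runs in three steps.

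First, I reinterpret the hypothesis on $g$. In \Cref{Thrm:positive_eigens_for_unit_trace_H}, $g$ is only required to be positive on $[0,1]$. With $\Tr H$ unknown, I take the "same conditions" to mean that $g(x) > 0$ on the relevant eigenvalue range $[0,\infty)$. I also take $m_1 > 0$, which is needed so that $0$ is a root and the "if and only if" statement holds. Under these assumptions, $x^{m_1} \ge 0$ for $x \ge 0$, and each factor $(c_i - x)^{n_i} \ge 0$ because $n_i$ is even. Hence $G'_{NN}(x) \ge 0$ for all $x \ge 0$. Since $g > 0$ there, equality holds if and only if $x = 0$ or $x = c_i$ for some $i$, i.e.\ if and only if $x \in C'_{NN}$.

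Second, let $\lambda_1, \dots, \lambda_N$ be the eigenvalues of $H$, which are all non-negative because $H \ge 0$. By linearity of the trace in the monomials,
\begin{equation*}
\sum_j b'_j \Tr H^j = \sum_j b'_j \sum_r \lambda_r^j = \sum_r G'_{NN}(\lambda_r).
\end{equation*}
Each summand is non-negative, which gives Inequality~\eqref{E:non-negative_Hermitian_eigens_alt}. The sum vanishes if and only if every summand vanishes, i.e.\ if and only if every $\lambda_r \in C'_{NN}$.

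Third, for the multiplicity statement, suppose all eigenvalues lie in $C'_{NN}$, and let $k_i$ be the multiplicity of $c_i$. For $j \ge 1$, the zero eigenvalues contribute nothing, so $\Tr H^j = \sum_i k_i c_i^j$. The system for $j = 1, \dots, p$ has the $p \times p$ coefficient matrix $(c_i^j)$. This matrix equals the Vandermonde matrix $(c_i^{j-1})$ times $\mathrm{diag}(c_i)$, and it is invertible because the $c_i$ are distinct and non-zero. Hence $(k_i)$ and $(\Tr H^j)_{j=1}^p$ determine each other, exactly as in the second part of \Cref{Thrm:Hermitian_eigens}.

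The main obstacle is the indexing of this last step. The statement lists $j = 0, 1, \dots, p-1$. For $j = 0$, the identity $\Tr H^0 = \sum_i k_i$ only holds if $0$ is not an eigenvalue, since $\Tr H^0$ counts the zero eigenvalues as well. So I would argue with $j = 1, \dots, p$, which is safe by the invertibility above. I would remark that the range $j = 0, \dots, p-1$ works when $0$ has multiplicity zero, or when $\Tr H^0$ is read as counting only the non-zero eigenvalues.
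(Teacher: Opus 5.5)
Your proposal is correct and follows the same route as the paper. You use that $G'_{NN}\ge 0$ on $[0,\infty)$ with zero set $C'_{NN}$, sum $G'_{NN}$ over the eigenvalues, and invert a Vandermonde-type system for the multiplicities. The points you make explicit are all right, and your repairs are correct:
\begin{itemize}
 \item $g$ must be positive on all of $[0,\infty)$, not just $[0,1]$; the paper's one-line proof leaves this implicit.
 \item $m_1>0$ is needed for the equality characterization; this is also left implicit.
 \item The $j=0$ equation fails when $0$ is an eigenvalue, so the stated index range is actually wrong. For example, take $H=\mathrm{diag}(1,0)$, $c_1=1$, $k_1=1$; then $\Tr H^0 = 2 \ne k_1$. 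Using $j=1,\dots,p$, as you do, fixes this.
\end{itemize}
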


\begin{proof}
 Following the logic in the proof of
 \Cref{Thrm:positive_eigens_for_unit_trace_H}, as $\Tr H$ is not known, I need
 to construct a polynomial in $x$ that is non-negative for all $x\ge 0$.
 Besides, elements of $C'_{NN}$ are roots of this polynomial.  Surely, one
 such polynomial is $G'_{NN}$.  The rest of the proof is the same as that of
 \Cref{Thrm:positive_eigens_for_unit_trace_H}.
\end{proof}

\begin{Cor}
 Let $H$ be a non-negative Hermitian matrix obeying $\Tr H^m = c^m k$, $\Tr
 H^n = c^n k$ and $\Tr H^t = c^t k$ where $m,n,t,k \in \numset{Z}^+$, $m > n >
 t$ and $c > 0$.  Then, $H = c \,\Pi$ if and only if $m, t$ are odd and $n$ is
 even.
 \label{Cor:positive_eigens_rank_k_extension}
\end{Cor}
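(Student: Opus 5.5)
The plan follows the proof of \Cref{Cor:rank_k_extension}, with \Cref{Cor:positive_eigens} replacing \Cref{Thrm:Hermitian_eigens}. I expect the ``only if'' direction to be the main obstacle, for the reason flagged in the third paragraph.

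First I reduce to $c = 1$ by replacing $H$ with $H/c$. This is legitimate because $c > 0$, so $H/c \ge 0$ still holds. I then set
\begin{equation*}
 h(x) = 1 - \frac{m-t}{m-n}\, x^{n-t} + \frac{n-t}{m-n}\, x^{m-t}
\end{equation*}
and $G'_{NN}(x) = x^t h(x)$. The linear combination $\Tr H^t - \frac{m-t}{m-n}\Tr H^n + \frac{n-t}{m-n}\Tr H^m$ equals $\sum_i G'_{NN}(\lambda_i)$. Under the hypotheses this sum is $k\bigl(1 - \frac{m-t}{m-n} + \frac{n-t}{m-n}\bigr) = 0$.

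Next I show that $h \ge 0$ on $[0,\infty)$ with equality only at $x=1$, and that this root is double. The weighted AM--GM inequality with weights $\frac{m-n}{m-t}$ and $\frac{n-t}{m-t}$, applied to $1$ and $x^{m-t}$, gives $x^{n-t} \le \frac{m-n}{m-t} + \frac{n-t}{m-t}\,x^{m-t}$, which is exactly $h \ge 0$. Equality holds iff $x^{m-t}=1$, i.e.\ iff $x=1$ for $x \ge 0$. Also $h(1)=h'(1)=0$. Hence $G'_{NN}$ is of the form in \Cref{Cor:positive_eigens} with $C'_{NN}=\{0,1\}$ and $m_1=t$. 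Since the sum above vanishes, every eigenvalue of $H$ lies in $\{0,1\}$. The last part of \Cref{Cor:positive_eigens} with $\Tr H^t = k$ then gives multiplicity $k$ for the eigenvalue $1$, so $H=\Pi$ is a rank $k$ projector. This establishes the ``if'' direction, and in particular shows that the parity pattern $m,t$ odd, $n$ even suffices.

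For the ``only if'' direction, I would mirror \Cref{Cor:rank_k_extension}. Suppose the parity pattern fails. I would try to exhibit a matrix, not of the form $c\,\Pi$, that meets the three trace constraints. The first attempt is a two-eigenvalue perturbation $\operatorname{diag}(1+\epsilon_1, \epsilon_2,\dots)$, solving the three equations by the implicit function theorem in the spirit of \Cref{Rem:optimal_number_of_constraints}. This is where I expect the difficulty, and I suspect the direction is false as literally stated. The AM--GM step above used only $x\ge 0$ and never used parity, so for genuinely non-negative $H$ any $m>n>t$ seems to force $H = c\,\Pi$. The parity condition can only matter if the argument is allowed to see negative eigenvalues, that is, if ``non-negative'' is relaxed to Hermitian as in \Cref{Cor:rank_k_extension}. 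I would therefore check carefully whether such a counterexample can be made positive semidefinite. If it cannot, the ``only if'' claim must be read as in \Cref{Cor:rank_k_extension} rather than proved under the non-negativity hypothesis.
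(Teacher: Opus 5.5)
Your ``if'' argument is correct and is the paper's proof: the same polynomial $G'_{NN}(x)=x^t h(x)$ is fed into \Cref{Cor:positive_eigens}. Your weighted AM--GM step, together with the double-root check at $x=1$, supplies the ``elementary calculus'' that the paper only asserts.

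For the ``only if'' part you can state a conclusion rather than a suspicion. Your AM--GM step uses only $x\ge 0$ and never uses parity. So it already proves that, for non-negative $H$, the three trace identities force $H=c\,\Pi$ for \emph{every} $m>n>t$, and there is no positive semidefinite counterexample to search for. Concretely, take $(m,n,t)=(4,2,1)$ and $H=c\,\Pi$ with $\Pi$ of rank $k$. This meets every hypothesis and the conclusion, yet $m$ is even. So the parity condition is not necessary, whether the statement is read literally or as ``the constraints certify $H=c\,\Pi$ only under this parity''.

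The paper does not establish this direction either. Its proof defers to \Cref{Cor:rank_k_extension}, whose only converse is the trivial fact that $c\,\Pi$ satisfies the trace identities. What you and the paper actually prove, and what is true, is this: for non-negative $H$ and any $m>n>t$, $H=c\,\Pi$ with $\Pi$ a rank-$k$ projector if and only if the three identities hold.

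Your suggested rescue, reading the claim for Hermitian $H$ as in \Cref{Cor:rank_k_extension}, does not work. With $m,t$ odd and $n$ even, $x^t h(x)<0$ for all $x<0$, so the pattern is not even sufficient in the Hermitian setting. For example, take $(m,n,t)=(3,2,1)$, $c=1$, $k=2$ and $0<\epsilon<1/16$. The four real roots of $x^2(x-1)^2=\epsilon$ are the eigenvalues of a Hermitian non-projector with $\Tr H=\Tr H^2=\Tr H^3=2$. This parity pattern is what makes Inequality~\eqref{E:mnt_optimizing_non-positive_witness_for_general_Hermitian_special_cases} a useful non-positivity witness; it plays no role in the certification.
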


\begin{proof}
 Using the idea in the proof of \Cref{Cor:rank_k_extension}, let
 $\displaystyle G'_{NN}(x) = x^t \left( 1 - \frac{m-t}{m-n} \,x^{n-t} +
 \frac{n-t}{m-n} \,x^{m-t} \right)$.  As $G'_{NN}(x) \ge 0$ for all $x \ge 0$
 with equality holds if and only $x \in \{ 0, c \}$, this Corollary follows
 from \Cref{Cor:rank_k_extension,Cor:positive_eigens}.
\end{proof}

 The following Theorem is the common counterpart of
 \Cref{Thrm:rank_one_projector_iff1,Thrm:rank_one_projector_iff2} for
 non-negative Hermitian matrices.

\begin{Thrm}
 Suppose $H$ is a non-negative Hermitian matrix, $m$ is an odd positive integer, and $c \ge 0$.  Then,
 \begin{equation}
  (\Tr H^m)^2 \ge \Tr H^{2m} .
  \label{E:non-negative_rank_one_projector_inequality}
 \end{equation}
 Moreover, $H = c \,\Pi$ where $\Pi$ is a rank one projector if and only if $\Tr H^m = c^m$ and $\Tr H^{2m} = c^{2m}$.
 \label{Thrm:non-negative_rank_one_projector_iff}
\end{Thrm}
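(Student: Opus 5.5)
The argument mirrors the proof of \Cref{Thrm:rank_one_projector_iff1}, with non-negativity of $H$ replacing the evenness of the exponent. Let $\lambda_i \ge 0$ be the eigenvalues of $H$. For the inequality, expand
\begin{equation*}
 (\Tr H^m)^2 - \Tr H^{2m} = \sum_{i\ne j} \lambda_i^m \lambda_j^m .
\end{equation*}
Every term on the right is non-negative because $\lambda_i \ge 0$, so Inequality~\eqref{E:non-negative_rank_one_projector_inequality} follows. Equality holds exactly when $\lambda_i^m \lambda_j^m = 0$ for all $i \ne j$. Since $\lambda_i \ge 0$, this means at most one eigenvalue is non-zero.

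For the characterization, first treat $c = 0$. Then $\Tr H^m = 0$ together with $\lambda_i \ge 0$ forces $H = 0$. Here I read $c\,\Pi$ with $c = 0$ as the zero matrix, so this case is degenerate and consistent.

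For $c > 0$, replace $H$ by $H/c$, which is still non-negative, so we may take $c = 1$. The conditions $\Tr H^m = 1$ and $\Tr H^{2m} = 1$ give $(\Tr H^m)^2 = \Tr H^{2m}$. By the equality case above, at most one eigenvalue, say $\lambda_1$, is non-zero. Then $\lambda_1^m = 1$ with $\lambda_1 \ge 0$ gives $\lambda_1 = 1$. Hence $H$ is a rank one projector. The converse is immediate, because $(c\,\Pi)^j = c^j \Pi$ and $\Tr \Pi = 1$.

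There is no real obstacle. The only point needing care is that for non-negative $H$ the oddness of $m$ plays no role: both the sign argument and taking the unique non-negative $m$th root rely solely on $\lambda_i \ge 0$. I would add a remark to that effect rather than try to use the hypothesis. Alternatively, the result can be phrased through \Cref{Cor:positive_eigens}, but the direct eigenvalue computation is shorter.
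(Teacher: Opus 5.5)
Your proof is correct and follows the paper's argument essentially verbatim. Like the paper, you expand $(\Tr H^m)^2 - \Tr H^{2m} = \sum_{i\ne j}\lambda_i^m\lambda_j^m$, use $\lambda_i \ge 0$ to get non-negativity with equality iff at most one eigenvalue is non-zero, and recover that eigenvalue from $\Tr H^m = c^m$. Your remark that oddness of $m$ is not actually needed (the paper invokes it when taking the $m$th root, but $\lambda_1, c \ge 0$ already suffice) and your explicit handling of $c = 0$ are both correct refinements.
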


\begin{proof}
 Denote the eigenvalues of $H$ by $\lambda_i$'s.  Since $H \ge 0$, I know that
 $(\Tr H^m)^2 - \Tr H^{2m} = \sum_{i\ne j} \lambda_i^m \lambda_j^m \ge 0$ with
 equality holds if and only if at most one of the eigenvalues, say,
 $\lambda_1$ is positive.  Since $m$ is odd, the value of $\lambda_1$ equals
 $(\Tr H^m)^{1/m}$.  This completes the proof.
\end{proof}

 A common strategy used in the proofs of all the Theorems and Corollaries in
 this paper is to fix the values of the eigenvalues of $H$ up to permutation
 through constraints on the trace of its powers.  It pays no attention to the
 diagaonalizability of $H$.  Hence, these Theorems and Corollaries are valid
 if the Hermitian or non-negative $H$ is replaced by a complex-valued matrix
 with real or non-negative eigenvalues respectively provided that statements
 such as rank~$k$ orthogonal projectors are properly substituted, in this
 example, by exactly $k$ non-zero eigenvalues (measured in algebraic
 multiplicity) and that they all take the value of $1$.

 The methods used in showing
 \Cref{Thrm:rank_one_projector_iff1,Thrm:rank_one_projector_iff2,%
 Thrm:non-negative_rank_one_projector_iff} can be recast in the language of
 algebraic variety.  Nonetheless, I do not pursue this type of construction
 further because of two reasons.  First, there is no general construction
 strategy.  More importantly, one will see in
 \Cref{Subsec:witness_performance} that the polynomial construction method
 used in \Cref{Thrm:Hermitian_eigens} and the like plus
 \Cref{Thrm:rank_one_projector_iff1,Thrm:rank_one_projector_iff2,%
 Thrm:non-negative_rank_one_projector_iff} are already sufficient to give
 extremely powerful witnesses of various kinds.

 Now, I compare the strategy used here with those in
 Refs.~\cite{CFN20,EKetal20,YIG21,Netal21}.  By choosing $G_{NN}(x) =
 x(c-x)^2$ and minimizing the LHS of
 Inequality~\eqref{E:non-negative_Hermitian_eigens_inequality} over $c\in
 \numset{R}$ just like in \Cref{Rem:determining_C_H}, one gets
\begin{equation}
 \Tr H \Tr H^3 - (\Tr H^2)^2 \ge 0
 \label{E:TrH123}
\end{equation}
 for all $H \ge 0$.  This result and the proof method are
 essentially identical to those in Refs.~\cite{EKetal20,Netal21}.  In
 other words, \Cref{Thrm:positive_eigens_for_unit_trace_H} generalizes their
 results.

 In Ref.~\cite{CFN20}, De las Cuevas \emph{et al.} first showed that in
 Schatten $p$-norm the distance between a Hermitian matrix $H$ from the
 positive semi-definite cone is a function of $\dim H$ and the sum of the
 $p$th power of the absolute value of the negative eigenvalues of $H$.  They
 then used three different methods of decreasing accuracy and computational
 complexity to bound this distance, resulting in three different sufficient
 conditions for $H$ to be non-negative~\cite{CFN20}.  In contrast, methods
 reported in this paper treat positive and negative eigenvalues equally in the
 proof.  Besides, results here need not depend on $\dim H$ so long as $b_0$ or
 $b'_0 = 0$.

 Yu \emph{et al.} gave an alternative derivation of
 Inequality~\eqref{E:TrH123} plus its extensions.  Specifically, they proved
 a family of necessary conditions for $H \ge 0$ through certain Hankel
 matrices whose elements are trace of powers of $H$~\cite{YIG21}.  The first
 member of this family is Inequality~\eqref{E:TrH123}.  Their approach is very
 different from the one used here.

\subsection{Practical Numerical Error Analysis And Matrices With Eigenvalues
 Close To A Given Set}
\label{Subsec:trace_powers_numerics}
 In practice, one usually needs to show that the eigenvalues of a matrix is in
 the set $C_H, C_{NN}$ or $C'_{NN}$ where the number of elements in these sets
 is small.  Besides, one uses only a handful of trace powers in all the
 Theorems and Corollaries reported in
 \Cref{Subsec:trace_powers_general,Subsec:trace_powers_non-negative}.  In this
 situation, numerical errors in computing the values of $a_j$'s, $b_j$'s and
 $b'_j$'s are ignorable.  Then computing the LHS of
 Inequality~\eqref{Thrm:Hermitian_eigens} via recursive summation is
 numerically stable with the magnitude of the (forward) error $E$ satisfying
 \begin{equation}
  E \lesssim E_\text{bound} \equiv \sum_j |a_j| (2u|\Tr H^j| + |\delta_j|) ,
  \label{E:forward_error_bound}
 \end{equation}
 where $u$ is the machine epsilon of the computer hardware used and $\delta_j$
 is the uncertainty in $\Tr H^j$'s~\cite{NA1,NA2}.  Here I emphasize that the
 relative error can be high due to rounding for $a_j$'s and $b_j$'s alternate
 in sign.  Furthermore, if the trace powers are experimentally determined,
 then their uncertainties $\delta_j$'s are almost always much greater than $u$
 in magnitude at least when $j$ is small.  Consequently, one can safely drop
 the term involving $u$ in \Cref{E:forward_error_bound} when used in real
 experiments.

 In all cases, if the magnitude of the LHS of
 Inequality~\eqref{E:Hermitian_eigens} is much less than $E_\text{bound}$,
 then one could conclude that the eigenvalues of $H$ are consistent with the
 hypothesis that they are all in $C_H$.  At least, they should be close to
 $C_H$ in the sense that, say, $\displaystyle \max_j d(\lambda_j,C_H)$ is
 small for all $j$ where $\lambda_j$'s are eigenvalues of $H$ and
 $\displaystyle d(\lambda_j,C_H) \equiv \min_{i=1}^p |\lambda_j - c_i|$ is the
 Euclidean distance of the point $\lambda_j$ from the set $C_H$.

 Consider the variation below.  Suppose the LHS of
 Inequality~\eqref{E:Hermitian_eigens} equals a fixed number $\epsilon \ge 0$.
 Suppose further that $g(x)$ is not a rapidly varying function in the
 neighborhood of $C_H$.  Without lost of generality, by rearranging the
 indices, I may assume that $n_1 \le n_i$ for all $i$.  Then, by elementary
 calculus, it is straightforward to see that $\sum_j d(\lambda_j,C_H)$ is
 maximized when all $\lambda_j$'s satisfy $|\lambda_j - c_1| =
 \delta_\text{sum}$ where
\begin{subequations}
 \label{E:backward_error_bound}
 \begin{equation}
  \delta_\text{sum} \lesssim \left[ \frac{\epsilon + E_\text{bound}}{g(c_1)
  \dim H \,\prod_{i=2}^p (c_i - c_1)^{n_i}} \right]^{1/n_1} .
 \end{equation}
 Likewise, $\displaystyle \max_j d(\lambda_j,C_H)$ is maximized when exactly
 one $\lambda_j$ is not in $C_H$.  Besides, this $\lambda_j$ obeys
 $|\lambda_j - c_1| = \delta_{\max}$ with
 \begin{equation}
  \delta_{\max} \lesssim \left[ \frac{\epsilon + E_\text{bound}}{g(c_1)
  \prod_{i=2}^p (c_i - c_1)^{n_i}} \right]^{1/n_1} .
 \end{equation}
\end{subequations}

 Similar bounds can be drawn for all other Theorems and Corollaries in
 \Cref{Subsec:trace_powers_general,Subsec:trace_powers_non-negative}.  I omit
 them here to save space as details can be easily filled in by interested
 readers.

 Last but not least, I discuss the numerical stability in finding the
 multiplicities of eigenvalues of $H$.  Recall from the proof of
 \Cref{Thrm:Hermitian_eigens} that one may find these multiplicities $k_i$'s by
 inverting the Vandermonde matrix $(c_i^j)_{ij}$.  (More precisely, one uses
 the closest non-negative integral values of provisional values of $k_i$'s
 upon solving the Vandermonde matrix equation.)  Nevertheless, in general the
 condition number of this matrix grows exponential in $p$, namely, the number
 of $c_i$'s in the set $C_H, C_{NN}$ or $C'_{NN}$~\cite{Pan}.  Hence, finding
 the multiplicities of $k$ and $k_i$'s through values of $\Tr H^j$'s using the
 system of linear equations stated in \Cref{Thrm:Hermitian_eigens,%
 Thrm:positive_eigens_for_unit_trace_H} as well as \Cref{Cor:positive_eigens}
 is numerically unstable when $p$ is large even if all the $c_i$'s are
 precisely known.

\section{Various Witnesses}
\label{Sec:witness}
 Each violation of a necessary condition for a matrix to be Hermitian (non-negative) can be used as a non-Hermitian (non-positive\footnote{Here I use the term non-positive instead of the technically more correct but unnatural terms of ``non-non-negative'' or ``non-semipositive definite'' to describe a Hermitian matrix with at least one negative eigenvalue.}) witness.  Surely, combined with the famous \PPT criterion for the separability of a density matrix~\cite{PPT}, every non-positive witness is also an \EW.  I write down these witnesses below.

\subsection{Formal Statements}
\label{Subsec:witness_statements}

\begin{Thrm}[Non-Hermitian Witness {[}\NHW{]}]
 Let $a_j$ be the coefficient of $x^j$ defined in \Cref{Thrm:Hermitian_eigens}, which in turn is a function of $c_i$'s.  Then a matrix $M$ is not Hermitian if
 \begin{widetext}
  \begin{subequations}
   \label{E:non-Hermitian_witness}
   \begin{gather}
    \Im (\Tr M^q) \ne 0 , \text{~or}
    \label{E:imag_part_non-Hermitian_witness} \\
    \left[ \inf_{c_1,\dots,c_p \in \numset{R}}
     \smashoperator{\sum_{j=0}^{\deg g(x) + \sum_i n_i}} \quad
     a_j(c_1,\dots,c_p) \,\Re (\Tr M^j) < 0 , \enspace \Im (\Tr M) = \Im (\Tr
     M^2) = \dots = 0 \right] , \text{~or}
     \label{E:optimizing_non-Hermitian_witness} \\
    \left\{ [\Re (\Tr M^n)]^2 < \Re (\Tr M^{2n}), \enspace \Im (\Tr M^n) = \Im
     (\Tr M^{2n}) = 0 \right\} , \text{~or}
     \label{E:12_non-Hermitian_witness} \\
    \left\llbracket \left\{ [\Re (\Tr M^{2m})]^3 < [\Re (\Tr M^{3m})]^2 ,
     \enspace \Re (\Tr M^{2m}) > 0 \} \text{~or~} \Re (\Tr M^{2m}) < 0
     \right\}, \enspace \Im (\Tr M^{2m}) = \Im (\Tr M^{3m}) = 0
     \right\rrbracket
     \label{E:123_optimized_non-Hermitian_witness}
   \end{gather}
  \end{subequations}
 \end{widetext}
 for $q \in \numset{Z}^+$ and some even (odd) positive integer $n$ ($m$).  Notice that any one of the above inequalities is also a non-Hermitian witness (\NHW) if $M$ is a linear operator provided that the values of $\Re (\Tr M^j)$'s and $\Im (\Tr M^j)$'s involved are well-defined.
 \label{Thrm:non-Hermitian_witness}
\end{Thrm}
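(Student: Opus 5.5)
The plan is to prove the contrapositive. Assume $M$ is Hermitian, with real eigenvalues $\lambda_i$ (counted with multiplicity). I will show that every one of the conditions \eqref{E:imag_part_non-Hermitian_witness}--\eqref{E:123_optimized_non-Hermitian_witness} fails. Since the statement joins these conditions by ``or'', it then suffices to treat each one separately. Throughout, I use $\Tr M^j = \sum_i \lambda_i^j$.

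\emph{Condition \eqref{E:imag_part_non-Hermitian_witness}.} For Hermitian $M$, each $\lambda_i^q$ is real, so $\Im(\Tr M^q)=0$ for every $q\in\numset{Z}^+$. Consequently, in the remaining three conditions the imaginary-part requirements hold automatically, and $\Re(\Tr M^j)=\Tr M^j$.

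\emph{Condition \eqref{E:optimizing_non-Hermitian_witness}.} For every choice of $c_1,\dots,c_p\in\numset{R}$, \Cref{Thrm:Hermitian_eigens} gives $\sum_j a_j(c_1,\dots,c_p)\Tr M^j \ge 0$. Hence the infimum over the $c_i$'s is $\ge 0$, in line with \Cref{Cor:Hermitian_eigens}, so it cannot be negative.

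\emph{Condition \eqref{E:12_non-Hermitian_witness}.} This is excluded directly by Inequality~\eqref{E:rank_one_projector_iff1_inequality} of \Cref{Thrm:rank_one_projector_iff1}, since $n$ is even.

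\emph{Condition \eqref{E:123_optimized_non-Hermitian_witness}.} This is the only condition needing a new estimate. First, $\Tr M^{2m}=\sum_i \lambda_i^{2m}\ge 0$, which rules out the branch $\Re(\Tr M^{2m})<0$. For the other branch, set $a_i=\lambda_i^m\in\numset{R}$. Then
\begin{equation*}
|\Tr M^{3m}| = \Bigl|\sum_i a_i^3\Bigr| \le \sum_i |a_i|^3 \le \Bigl(\sum_i a_i^2\Bigr)^{3/2},
\end{equation*}
where the last step is the monotonicity of $\ell^p$ norms, $\|a\|_3\le\|a\|_2$. Squaring gives $(\Tr M^{3m})^2\le(\Tr M^{2m})^3$, which contradicts the strict inequality in that branch.

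As a cross-check explaining the word ``optimized'', one can instead apply Inequality~\eqref{E:rank_one_projector_inequality} to $H/c$ for arbitrary real $c\neq 0$. This yields $\tau_2^2 c^{-4m} - 2\tau_3 c^{-3m} + \tau_2 c^{-2m} \ge 0$, where $\tau_k=\Tr M^{km}$. Because $m$ is odd, $t=c^{-m}$ ranges over all nonzero reals, so the quadratic $\tau_2^2 t^2-2\tau_3 t+\tau_2\ge 0$ holds for all $t$ (and trivially at $t=0$). When $\tau_2>0$, nonnegativity of its discriminant condition gives $\tau_3^2\le\tau_2^3$, the same conclusion.

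\emph{Extension to linear operators.} The final sentence about linear operators follows because the argument uses only that a Hermitian (self-adjoint) operator has real spectrum and that the traces $\Tr M^j$ are the corresponding spectral sums whenever they are well-defined. The main point needing care is this infinite-dimensional case, where I would invoke the spectral theorem and assume trace-class powers so that the sums converge absolutely. With that assumption, the same termwise inequalities, including the $\ell^p$ norm comparison for sequences, apply verbatim.
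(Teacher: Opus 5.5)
Your proof is correct. For conditions \eqref{E:imag_part_non-Hermitian_witness}, \eqref{E:optimizing_non-Hermitian_witness}, \eqref{E:12_non-Hermitian_witness} and for the operator extension, you follow the paper: you sum the pointwise nonnegativity of $G_H$ over the spectrum, and you invoke Inequality~\eqref{E:rank_one_projector_iff1_inequality}. You differ only on the step the paper treats as non-trivial, condition \eqref{E:123_optimized_non-Hermitian_witness}.

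The paper's proof there is exactly your ``cross-check'': apply Inequality~\eqref{E:rank_one_projector_inequality} of \Cref{Thrm:rank_one_projector_iff2} to $M/c$ and minimize over $c$. That route explains the word ``optimized'', since the witness is the union over $c$ of the rank-one-projector witnesses. It does, however, need $m$ odd so that $c^{-m}$ sweeps all nonzero reals.

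Your main argument instead obtains $(\Tr M^{3m})^2\le(\Tr M^{2m})^3$ directly from $\|a\|_3\le\|a\|_2$ with $a_i=\lambda_i^m$. This is shorter and never uses the parity of $m$. It also shows that the two branches of \eqref{E:123_optimized_non-Hermitian_witness} merge into the single test $[\Re(\Tr M^{3m})]^2>[\Re(\Tr M^{2m})]^3$. That single test automatically covers $\Re(\Tr M^{2m})<0$, and it also catches the case $\Re(\Tr M^{2m})=0\ne\Re(\Tr M^{3m})$, which the statement leaves out.

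One small point of rigor concerns \eqref{E:optimizing_non-Hermitian_witness}. \Cref{Thrm:Hermitian_eigens} assumes the $c_i$ are distinct, but the infimum ranges over all real tuples. So you should cite the pointwise bound $G_H(x)\ge 0$ on $\numset{R}$, which needs only $g>0$ and even $n_i$, rather than the theorem as stated.
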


\begin{proof}
 The only non-trivial witness that requires discussion is
 Inequality~\eqref{E:123_optimized_non-Hermitian_witness}.
 \Cref{Thrm:rank_one_projector_iff2} implies that $\Re [\Tr (M/c)^{2m}]\}^2 -
 2 \Re [\Tr (M/c)^{3m}] + \Re [\Tr (M/c)^{2m}] < 0$ and $\Im (\Tr M^{2m}) =
 \Im (\Tr M^{3m}) = 0$ guarantee $M$ to be non-Hermitian.
 Inequality~\eqref{E:123_optimized_non-Hermitian_witness} is then obtained by
 minimizing $c$.

 In the case that $M$ is a linear operator, the assertion follows from the
 following observation.  If the relevant $\Re (\Tr M^j)$'s are well-defined,
 then using the same argument in the proof of \Cref{Thrm:Hermitian_eigens},
 I know that LHS of Inequality~\eqref{E:Hermitian_eigens} is non-negative
 provided that $M$ is Hermitian.  Consequently,
 Inequality~\eqref{E:optimizing_non-Hermitian_witness} is a \NHW.  All
 remaining inequalities in this Theorem can be shown to be valid \NHW{s} in a
 similar way.
\end{proof}

\begin{Rem}
 Several \NHW{s} in the form
 Inequality~\eqref{E:optimizing_non-Hermitian_witness} can be simplified.  For
 example, putting $G_H(x) = x^n$ for some even positive integer $n$ gives the
 \NHW
 \begin{subequations}
  \label{E:optimizing_non-Hermitian_witness_special_cases}
  \begin{equation}
   \Re(\Tr M^n) < 0 .
  \label{E:2_optimizing_non-Hermitian_witness_special_cases}
  \end{equation}
  By setting $G_H(x) = (c^q - x^q)^2$ for $q \in \numset{Z}^+$, the
  corresponding \NHW is
  \begin{equation}
   \dim M \,\Re (\Tr M^{2q}) < [\Re (\Tr M^q)]^2 .
   \label{E:012_optimizing_non-Hermitian_witness_special_cases}
  \end{equation}
  And by using the $G_H(x)$ in the proof of \Cref{Cor:rank_k_extension}, the
  corresponding \NHW is
  \begin{align}
   & \Re (\Tr M^n) c^{\ell-n} - \frac{\ell - n}{\ell - m} \,\Re (\Tr M^m)
    c^{\ell - m} + \notag \\
   & \frac{m-n}{\ell - m} \,\Re (\Tr M^\ell) < 0 \enspace \text{and~}
    \Re (\Tr M^n) > 0
   \label{E:lmn_optimizing_non-Hermitian_witness_special_cases}
  \end{align}
 \end{subequations}
 with $c = [\Re (\Tr M^m)/\Re (\Tr M^n)]^{1/(m-n)}$ for $\ell > m > n > 0$
 with $\ell, n$ being even integers and $m$ being an odd integer,
 respectively.  (Since $m-n$ is an odd positive integer, $c$ is a
 well-defined real number.)
 \label{Rem:non-Hermitian_witness}
\end{Rem}

\begin{Thrm}[Non-Positive Witness For Unit Trace Hermitian Matrices {[}\UTNPW{]}]
 Let $b_j$ be the coefficient of $x^j$ defined in \Cref{Thrm:positive_eigens_for_unit_trace_H}, which in turn is a function of $c_i$'s.  Then a unit trace Hermitian matrix $H$ is non-positive in the sense that it has at least one negative eigenvalue if
 \begin{subequations}
  \label{E:non-positive_witness_for_unit-trace_Hermitian}
  \begin{gather}
   \min_{c_1,\dots,c_p \in [0,1]}
   \smashoperator{\sum_{j=0}^{\substack{\deg g(x) + m_1 + \\ m_2 + \sum_i n_i}}}
   \enspace b_j(c_1,\dots,c_p) \Tr H^j < 0 , \text{~or}
   \label{E:optimizing_non-positive_witness_for_unit-trace_Hermitian} \\
   (\Tr H^m)^2 < \Tr H^{2m}
   \label{E:12_non-positive_witness_for_unit-trace_Hermitian}
  \end{gather}
 \end{subequations}
 for some odd positive integer $m$.  Besides, at least one of $m_1$ or $m_2$ has to be an odd positive integer.  Moreover, the above inequalities are valid non-positive witnesses for unit trace Hermitian operator (\UTNPW{s}) if $H$ is a unit trace Hermitian operator provided that the values of $\Tr H^j$'s involved are well-defined.
 \label{Thrm:non-positive_witness_for_unit-trace_Hermitian}
\end{Thrm}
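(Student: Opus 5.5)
Both witnesses come from contraposition: I will show that if a unit trace Hermitian $H$ is non-negative, then both inequalities fail. Suppose $H\ge 0$ with $\Tr H = 1$, and let $\lambda_i$ denote its eigenvalues. Then every $\lambda_i\ge 0$ and $\sum_i\lambda_i = 1$, so every $\lambda_i\in[0,1]$.

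For Inequality~\eqref{E:optimizing_non-positive_witness_for_unit-trace_Hermitian}, fix any $c_1,\dots,c_p\in[0,1]$ and form $G_{NN}$ as in \Cref{Thrm:positive_eigens_for_unit_trace_H}. On $[0,1]$ each factor has a definite sign:
\begin{itemize}
\item $g(x)>0$ by hypothesis;
\item $x^{m_1}\ge 0$ and $(1-x)^{m_2}\ge 0$, whatever the parity of $m_1,m_2$;
\item $(c_i-x)^{n_i}\ge 0$ because each $n_i$ is even.
\end{itemize}
Hence $G_{NN}(x)\ge 0$ on $[0,1]$. Since $\sum_j b_j\Tr H^j=\sum_i G_{NN}(\lambda_i)$, this sum is non-negative for each admissible choice of the $c_i$'s.

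To pass to the minimum, note that $b_j$ is polynomial in $(c_1,\dots,c_p)$ and $[0,1]^p$ is compact, so the minimum exists. It is attained at some point, and at that point the sum is still $\ge 0$. So a strictly negative minimum is impossible when $H\ge 0$.

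The parity requirement on $m_1$ or $m_2$ is not needed for soundness; it only keeps the witness from being vacuous. If both $m_1,m_2$ were even (including zero), $G_{NN}$ would be non-negative on all of $\numset{R}$. The sum would then be $\ge 0$ for every Hermitian $H$, so the inequality could never detect a negative eigenvalue. An odd power of $x$ or of $(1-x)$ makes $G_{NN}$ negative somewhere outside $[0,1]$, which is what lets the witness fire. I will note this in a remark rather than prove it, since the witness property only needs soundness.

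For Inequality~\eqref{E:12_non-positive_witness_for_unit-trace_Hermitian}, I invoke Inequality~\eqref{E:non-negative_rank_one_projector_inequality} of \Cref{Thrm:non-negative_rank_one_projector_iff} directly. For $H\ge 0$ and odd $m$, it gives $(\Tr H^m)^2-\Tr H^{2m}=\sum_{i\ne j}\lambda_i^m\lambda_j^m\ge 0$. Violating it therefore forces some $\lambda_i<0$.

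For the operator extension, I repeat the argument with the spectrum in place of the finite eigenvalue list. For a unit trace Hermitian operator with well-defined traces, $H\ge 0$ puts the spectrum in $[0,1]$. Then $\sum_j b_j\Tr H^j=\Tr G_{NN}(H)\ge 0$ by the spectral theorem, since $G_{NN}\ge 0$ on the spectrum. Likewise, $(\Tr H^m)^2-\Tr H^{2m}$ is a sum of non-negative products $\lambda_i^m\lambda_j^m$ over the (trace class) eigenvalues. The step I expect to need the most care is making sure the trace functionals behave linearly on the polynomial, i.e.\ $\Tr G_{NN}(H)=\sum_j b_j\Tr H^j$. This is exactly what the proviso that the relevant $\Tr H^j$ are well-defined supplies. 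The rest is the same pointwise non-negativity argument as in \Cref{Thrm:Hermitian_eigens}.
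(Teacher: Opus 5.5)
Your proposal is correct and follows essentially the paper's route. Soundness comes by contraposition from $G_{NN}\ge 0$ on $[0,1]$ (\Cref{Thrm:positive_eigens_for_unit_trace_H}), continuity and compactness justify the minimum over $[0,1]^p$, the parity condition on $m_1,m_2$ is treated only as a non-vacuity requirement, and the second inequality follows from \Cref{Thrm:non-negative_rank_one_projector_iff} (which the paper leaves implicit); evaluating directly at every point of the closed cube, as you do, slightly streamlines the paper's infimum-over-$(0,1)$-then-continuity step, since $(c_i-x)^{n_i}\ge 0$ needs neither $c_i\in(0,1)$ nor distinctness.
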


\begin{proof}
 This Theorem follows from \Cref{Thrm:positive_eigens_for_unit_trace_H} after
 taking the following two subtleties into account.  First, just like
 Inequality~\eqref{E:optimizing_non-Hermitian_witness} in
 \Cref{Thrm:non-Hermitian_witness}, this witness is optimized through the
 infimum over $c_i$'s in the range $(0,1)$.  Note that from
 \Cref{Thrm:positive_eigens_for_unit_trace_H}, $b_j$ can be regarded as a
 multivariate polynomial of $c_i$'s for all $j$.  Besides, each $c_i \in
 (0,1)$.  Thus, the LHS of
 Inequality~\eqref{E:optimizing_non-positive_witness_for_unit-trace_Hermitian}
 is continuous functions in $c_i$'s.  Therefore, one may replace the infimum by minimum over the closed interval $[0,1]$.  Second, if both $m_1$ and $m_2$
 are even positive integers, then $G_{NN}(x) \ge 0$ for all $x \in
 \numset{R}$.  Thus, the witness in the form of
 Inequality~\eqref{E:optimizing_non-positive_witness_for_unit-trace_Hermitian}
 can never true.  Consequently, the case of $m_1$ and $m_2$ being even can be
 dropped from the witness test.
\end{proof}

 Obviously, same argument in the above proof applies to
 Inequalities~\eqref{E:optimizing_non-positive_witness_for_general_Hermitian}
 in \Cref{Thrm:non-positive_witness_for_general_Hermitian}
 and Inequality~\eqref{E:optimizing_entanglement_witness} in
 \Cref{Thrm:entanglement_witness} below.

\begin{Thrm}[Non-Positive Witness For General Hermitian Matrices {[}\NPW{]}]
 Let $b'_j$ be the coefficient of $x^j$ defined in \Cref{Cor:positive_eigens}, which in turn is a function of $c_i$'s.  Then a Hermitian matrix $H$ is non-positive if
 \begin{subequations}
  \label{E:non-positive_witness_for_general_Hermitian}
  \begin{gather}
   \inf_{c_1,\dots,c_p \ge 0}
   \smashoperator{\sum_{j=0}^{\substack{\deg g(x) + \\ m_1 + \sum_i n_i}}}
   \enspace b'_j(c_1,\dots,c_p) \Tr H^j < 0 , \text{~or}
   \label{E:optimizing_non-positive_witness_for_general_Hermitian} \\
   (\Tr H^m)^2 < \Tr H^{2m}
   \label{E:12_optimizing_non-positive_witness_for_general_Hermitian}
  \end{gather}
 \end{subequations}
 for some odd integers $m_1, m > 0$.  Besides, the above inequalities are valid non-positive witnesses (\NPW{s}) for Hermitian operator if $H$ is a Hermitian operator provided that the values of $\Tr H^j$'s involved are well-defined.
 \label{Thrm:non-positive_witness_for_general_Hermitian}
\end{Thrm}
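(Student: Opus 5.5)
The plan is to prove both witnesses by contraposition: if $H \ge 0$, neither inequality can hold. This mirrors the proof of \Cref{Thrm:non-positive_witness_for_unit-trace_Hermitian}, with \Cref{Cor:positive_eigens} replacing \Cref{Thrm:positive_eigens_for_unit_trace_H}, and \Cref{Thrm:non-negative_rank_one_projector_iff} handling the second inequality.

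\emph{First witness.} Assume $H \ge 0$ with eigenvalues $\lambda_i \ge 0$. Fix any $c_1,\dots,c_p \ge 0$, any $g$ positive on $[0,\infty)$, any $m_1$, and even $n_i$. Then $G'_{NN}(x) \ge 0$ for every $x \ge 0$, so
\[
\sum_j b'_j \Tr H^j = \sum_i G'_{NN}(\lambda_i) \ge 0 .
\]
This is exactly Inequality~\eqref{E:non-negative_Hermitian_eigens_alt} of \Cref{Cor:positive_eigens}, and it holds for every admissible choice of the $c_i$'s. Taking the infimum over $c_1,\dots,c_p \ge 0$ therefore gives a value $\ge 0$, which contradicts Inequality~\eqref{E:optimizing_non-positive_witness_for_general_Hermitian}.

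Two points need care here. The infimum is taken over the closed set $c_i \ge 0$ rather than $c_i > 0$. Allowing $c_i = 0$ costs nothing, because the bound $G'_{NN}(\lambda) \ge 0$ for $\lambda \ge 0$ does not depend on the $c_i$'s being positive. The region is also unbounded, so the infimum need not be attained. That does not matter, since we only need a lower bound, and a pointwise bound of $0$ passes to the infimum.

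\emph{Why $m_1$ must be odd.} If $m_1$ is even, then $G'_{NN}(x) \ge 0$ for all real $x$, not only $x \ge 0$. The left side of Inequality~\eqref{E:optimizing_non-positive_witness_for_general_Hermitian} is then nonnegative for every Hermitian $H$, so the witness can never fire and that case is vacuous. This is the same remark as in the proof of \Cref{Thrm:non-positive_witness_for_unit-trace_Hermitian}.

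\emph{Second witness.} Inequality~\eqref{E:12_optimizing_non-positive_witness_for_general_Hermitian} is the contrapositive of Inequality~\eqref{E:non-negative_rank_one_projector_inequality} in \Cref{Thrm:non-negative_rank_one_projector_iff}. Indeed, for $H \ge 0$,
\[
(\Tr H^m)^2 - \Tr H^{2m} = \sum_{i \ne j} \lambda_i^m \lambda_j^m \ge 0 .
\]

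\emph{Extension to Hermitian operators.} The argument applies unchanged, provided the traces involved are well defined. In that case $\sum_j b'_j \Tr H^j = \sum_i G'_{NN}(\lambda_i)$ is a sum of nonnegative terms, and the same holds for the double sum in the second witness.

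The main obstacle is only the bookkeeping about the infimum over the unbounded set $c_i \ge 0$. Unlike in the proof of \Cref{Thrm:non-positive_witness_for_unit-trace_Hermitian}, we cannot replace it by a minimum, but the witness only needs the pointwise lower bound, so the argument goes through.
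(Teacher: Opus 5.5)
Your proof is correct and follows the paper's route. The paper proves this theorem by reusing the \UTNPW argument: take the contrapositive of \Cref{Cor:positive_eigens} and pass the bound through the infimum, note that even $m_1$ makes the witness vacuous, and use Inequality~\eqref{E:non-negative_rank_one_projector_inequality} of \Cref{Thrm:non-negative_rank_one_projector_iff} for the second inequality. Your requirement that $g>0$ on all of $[0,\infty)$ (rather than only on $[0,1]$, as inherited from \Cref{Thrm:positive_eigens_for_unit_trace_H}) and your remark that the infimum over the unbounded set $c_i \ge 0$ need not be attained are both appropriate refinements.
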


\begin{Rem}
 Just like \Cref{Rem:non-Hermitian_witness}, a few \NPW{s} from
 Inequality~\eqref{E:optimizing_non-positive_witness_for_general_Hermitian}
 can be simplified.  They include
 \begin{subequations}
  \label{E:optimizing_non-positive_witness_for_general_Hermitian_special_cases}
  \begin{equation}
   \Tr H^m < 0 ,
   \label{E:m_optimizing_non-positive_witness_for_general_Hermitian_special_cases}
  \end{equation}
  \begin{equation}
   (\Tr H) (\Tr H^{2q+1}) < (\Tr H^{q+1})^2 \enspace \text{and~} \Tr H > 0 ,
   \label{E:1q2q_optimizing_non-positive_witness_for_general_Hermitian_special_cases}
  \end{equation}
  for $q \in \numset{Z}^+$, and
  \begin{align}
   & (\Tr H^t) c^{m-t} - \frac{m-t}{m-n} (\Tr H^n) c^{m-n} + \notag \\
   & \frac{n-t}{m-n} (\Tr H^m) < 0 \enspace \text{and~} (\Tr H^t) > 0
   \label{E:mnt_optimizing_non-positive_witness_for_general_Hermitian_special_cases}
  \end{align}
 \end{subequations}
 with $c = (\Tr H^n / \Tr H^t)^{1/(n-t)}$ for $m > n > t > 0$ with $m,t$ being
 odd integers and $n$ being an even integer, respectively.  (Since $n-t$ is a
 positive odd integer, $c$ is a well-defined real number.)
 \label{Rem:non-positive_witness_for_general_Hermitian}
\end{Rem}

\begin{Thrm}[Entanglement Witness {[}\EW{]}]
 Using the notations in \Cref{Thrm:non-positive_witness_for_unit-trace_Hermitian}, a bipartite density matrix $\rho$ a finite--dimensional Hilbert space is entangled if its \PPT $\sigma \equiv \rho^\text{\PPT}$ obeys
 \begin{subequations}
  \label{E:entanglement_witness}
  \begin{gather}
   \min_{c_1,\dots,c_p \in [0,1]}
   \smashoperator{\sum_{j=0}^{\substack{\deg g(x) + m_1 + \\ m_2 + \sum_i n_i}}}
   \enspace b_j(c_1,\dots,c_p) \Tr \sigma^j < 0 , \text{~or}
   \label{E:optimizing_entanglement_witness} \\
   (\Tr \sigma^m)^2 < \Tr \sigma^{2m} ,
  \end{gather}
 \end{subequations}
 for some odd positive integer $m$.  In addition, at least one of $m_1$ or $m_2$ has to be an odd positive integer.  Furthermore, if the underlying Hilbert space of $\rho$ is infinite-dimensional, the same conclusion holds provided that the corresponding $\Tr \sigma^j$'s are well-defined.
 \label{Thrm:entanglement_witness}
\end{Thrm}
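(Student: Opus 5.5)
The plan is to reduce the statement to \Cref{Thrm:non-positive_witness_for_unit-trace_Hermitian} through the \PPT criterion~\cite{PPT}. First I will check that $\sigma \equiv \rho^\text{\PPT}$ satisfies the hypotheses of that Theorem. Partial transposition is linear and preserves Hermiticity, so $\sigma$ is Hermitian. It also preserves the trace: tracing the partial transpose over both subsystems gives $\Tr \sigma = \Tr \rho = 1$. So $\sigma$ is a unit trace Hermitian matrix.

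Next I argue by contraposition. Suppose $\rho$ is separable. Then by the \PPT criterion $\sigma \ge 0$, and since $\Tr \sigma = 1$ all eigenvalues of $\sigma$ lie in $[0,1]$. On $[0,1]$ the polynomial $G_{NN}$ of \Cref{Thrm:positive_eigens_for_unit_trace_H} is non-negative for every choice of $c_1,\dots,c_p \in [0,1]$, because $g>0$ there, $x^{m_1}(1-x)^{m_2} \ge 0$ on $[0,1]$, and each $n_i$ is even. Summing $G_{NN}$ over the eigenvalues of $\sigma$ gives $\sum_j b_j \Tr \sigma^j \ge 0$ for each such choice, hence the minimum over $[0,1]^p$ is also $\ge 0$. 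That minimum exists by the continuity argument already given in the proof of \Cref{Thrm:non-positive_witness_for_unit-trace_Hermitian}.

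For the second inequality, \Cref{Thrm:non-negative_rank_one_projector_iff} applied to $\sigma \ge 0$ gives $(\Tr\sigma^m)^2 \ge \Tr \sigma^{2m}$ for odd $m$. Thus either strict inequality in \Cref{E:entanglement_witness} contradicts separability, and $\rho$ must be entangled. The requirement that $m_1$ or $m_2$ be odd is carried over only to exclude the vacuous case. As noted in the proof of \Cref{Thrm:non-positive_witness_for_unit-trace_Hermitian}, if both are even then $G_{NN}\ge 0$ on all of $\numset{R}$ and the inequality can never hold.

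The main obstacle is the infinite-dimensional case. There I would use the fact that the \PPT criterion still holds for separable states, understood as limits of convex combinations of product states. This gives $\sigma \ge 0$ as a bounded self-adjoint trace-one operator, which is compact with discrete spectrum in $[0,1]$. The eigenvalue sums $\sum_\lambda G_{NN}(\lambda)$ then converge whenever the relevant $\Tr\sigma^j$ are well-defined, so the termwise argument above goes through. This is exactly the operator clause of \Cref{Thrm:non-positive_witness_for_unit-trace_Hermitian}, so I would invoke that clause rather than redo the analysis. The one point to check carefully is that $\Tr\sigma^0$ has no meaning in infinite dimensions. For that reason I would restrict to polynomials with $b_0 = 0$, which holds automatically when $m_1>0$.
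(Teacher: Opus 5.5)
Your proposal is correct and follows the paper's route: the paper obtains \Cref{Thrm:entanglement_witness} by combining the \PPT criterion with \Cref{Thrm:non-positive_witness_for_unit-trace_Hermitian}, applied to the unit-trace Hermitian matrix $\sigma = \rho^\text{\PPT}$, and simply remarks that the argument of that proof carries over. The paper leaves several of your checks implicit, and they are sound: that partial transposition preserves Hermiticity and trace, that for separable $\rho$ in infinite dimensions $\sigma$ is a positive trace-one (hence trace-class) operator, and that $b_0 = 0$ must be imposed there.
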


\begin{Rem}
 Notice that $a_j, b_j$ and $b'_j$ are multivariate polynomials of $c_j$'s.
 Hence, even finding approximate global infima or minima in
 \Cref{E:optimizing_non-Hermitian_witness,%
 E:optimizing_non-positive_witness_for_unit-trace_Hermitian,%
 E:optimizing_non-positive_witness_for_general_Hermitian,%
 E:optimizing_entanglement_witness} are
 NP-hard in general~\cite{poly_problem1,poly_problem2,poly_problem3,%
 poly_problem4,poly_problem5}.
 Fortunately, fast algorithms exist for generic
 situations~\cite{poly_problem4,poly_problem5,sos-review}.  Besides, there is
 no need to find these solutions in practice most of the time for a witness
 can be established so long as the corresponding inequality hold for a
 particular choice of $c_i$'s.  This greatly speed up computations in almost
 all cases of interest.  Moreover, for the case of $p = 1$ so that $c_1$ is
 the only parameter to optimize, one may exactly find the global infimum.  For
 instance, if one further fix $g(x) = 1$, $m_1 = m_2 = 1$, and $n_1 = 2$, then
 it is straightforward to check that
 Inequality~\eqref{E:optimizing_non-Hermitian_witness} reduces to
 \begin{subequations}
  \begin{equation}
   \dim H \, \Tr H^2 < (\Tr H)^2 ,
   \label{E:reduced_optimizing_non-Hermitian_witness}
  \end{equation}
  Inequality~\eqref{E:optimizing_non-positive_witness_for_unit-trace_Hermitian}
  reduces to
  \begin{gather}
   \Tr H < \Tr H^2 \text{~or~} \notag \\
   \begin{aligned}
    & (\Tr H - \Tr H^2) (\Tr H^3 - \Tr H^4) \\
    <{} & [ \Tr H^2 - 2 \Tr H^3 + \Tr H^4 ]^2 ,
   \end{aligned}
   \label{E:reduced_optimizing_non-positive_witness_for_unit-trace_Hermitian}
  \end{gather}
  and Inequality~\eqref{E:optimizing_entanglement_witness} reduces to
  \begin{equation}
   (\Tr \sigma^3) < (\Tr \sigma^2)^2 ,
   \label{E:reduced_entanglement_witness}
  \end{equation}
 \end{subequations}
 respectively.  And as pointed out at the end of
 \Cref{Subsec:trace_powers_non-negative},
 Inequality~\eqref{E:reduced_entanglement_witness} had been reported in the
 literature~\cite{EKetal20,YIG21,Netal21}.
\end{Rem}

\begin{Rem}
 If
 Inequality~\eqref{E:optimizing_non-positive_witness_for_general_Hermitian}
 derived from the function $G'_{NN}(x)$ in \Cref{E:G_NN_def} is a \NPW for a
 matrix $H$, then so is the inequality corresponding to the function $(c-x)^2
 G'_{NN}(x)$ where $\displaystyle c = \max_{i,j} (c_i,\lambda_j)$ with
 $\lambda_j$ being the eigenvalues of $H$.  This is because $c \ge 0$ and
 hence $(c-x)^2$ is a decreasing function for $x \le c$.  Besides,
 $G'_{NN}(\lambda_j) < 0$ if and only if $\lambda_j < 0$.  As a result,
 \begin{align}
  & \sum_j (c-\lambda_j)^2 \,G'_{NN}(\lambda_j) \notag \\
  ={}& \sum_{j\colon \lambda_j \ge 0} (c-\lambda_j)^2 \,G'_{NN}(\lambda_j) +
   \sum_{j\colon \lambda_j < 0} (c-\lambda_j)^2 \,G'_{NN}(\lambda_j) \notag \\
  \le{}& \sum_{j\colon \lambda_j \ge 0} c^2 \,G'_{NN}(\lambda_j) +
   \sum_{j\colon \lambda_j < 0} c^2 \,G'_{NN}(\lambda_j) \notag \\
  ={}& c^2 \sum_j G'_{NN}(\lambda_j) < 0 .
 \end{align}
 In other words, by increasing the value of $p$ (which means by using more
 $c_i$'s), the corresponding optimized witness is at least as powerful as the
 original one.  The same conclusion holds for the witnesses in
 Inequalities~\eqref{E:optimizing_non-Hermitian_witness},
 \eqref{E:optimizing_non-positive_witness_for_unit-trace_Hermitian},
 and~\eqref{E:optimizing_entanglement_witness} through similar arguments.  On
 the other hand, this argument does not apply to the remaining witnesses
 reported in this Section.
 \label{Rem:powerful_witness}
\end{Rem}

\subsection{Performance Analysis}
\label{Subsec:witness_performance}
 Surely, it is impossible to check every single criterion in the witnesses
 stated in \Cref{Subsec:witness_statements} as there are infinitely many $m_1,
 m_2, n_i$'s and $g(x)$ involved.   Besides, no witness involving a finite
 number of trace powers is sufficient.  For example, if the eigenvalues of a
 non-Hermitian matrix $M$ are $\lambda_j = e^{2\pi i j/p}$ for $j = 1,2,\dots,
 p$, then $\Tr M^q = 0$ for $q = 1,2,\dots,p-1$.  In other words, information
 of $\Tr M^q$ with $q = 1,\dots,p-1$ is consistent with the zero matrix.
 Hence, no \NHW involving only these traces can identify $M$ as non-Hermitian.
 Similarly, there exists a non-positive Hermitian matrix $H$ with $\Tr H^q =
 0$ for $q = 1,2,\dots,p$ if $\dim H > p$.  Thus, no \NPW or \UTNPW involving
 only these traces can detect its non-positiveness.  All these are consistent
 with the well-known fact that determining the separability of a state is
 NP-hard~\cite{entanglement_witness}.

\subsubsection{Non-Hermitian Witness}
\label{Subsubsec:non-Hermitian_witness_performance}

 Here I will only consider a subset of \NHW{s} stated in
 \Cref{Thrm:non-Hermitian_witness} with $\Tr M^j$ with $j \le d_{\max} \equiv
 8$.  The reason for this choice will be apparent once the performance results
 are given.  More precisely, the following algorithm is used.

 \par\bigskip\noindent
 [Non-Hermitian Witness (\NHW) Algorithm]
 \begin{enumerate}
  \item Input a linear operator or a square matrix $M$ and the maximum degree
   $d_{\max}$ of the trace power to be used in \Cref{Thrm:Hermitian_eigens}.
   Set $d = 2$.
  \item Output that $M$ is non-Hermitian if any one of the following
   inequalities holds:
   \label{Alg_NH:iterate}
   \begin{enumerate}
    \item Inequalities~\eqref{E:imag_part_non-Hermitian_witness},
     \eqref{E:12_non-Hermitian_witness}
     or~\eqref{E:123_optimized_non-Hermitian_witness} where $d = q$, $d = 2n$
     or $d = 3m$ with $n$ being an even integer and $m$ being an odd integer;
    \item Inequality~\eqref{E:optimizing_non-Hermitian_witness_special_cases}
     with $d = 2q$ and $q$ is an integer, or with $d = \ell > m > n$ where
     $\ell$ and $n$ are even positive integers and $m$ is odd; or
    \item Inequality~\eqref{E:optimizing_non-Hermitian_witness} for $d =
     \sum_i n_i$ with $g(x) = 1$ and $n_i = 2$ for all $i$.  (To speed things
     up, this should be the last condition to test.  Moreover one should stop
     once a set of $c_i$'s that obeying
     Inequality~\eqref{E:optimizing_non-Hermitian_witness} is found for such a
     non-optimal solution is already enough to guarantee non-Hermitianness.  I
     addition, for the special case of $d = 2$, one may simplify
     Inequality~\eqref{E:optimizing_non-Hermitian_witness} to
     Inequality~\eqref{E:reduced_optimizing_non-Hermitian_witness}.)
     \label{Alg_NH:generic}
   \end{enumerate}
  \item If $d > d_{\max}$, then output that the nature of $M$ is inconclusive.
   Otherwise, increase $d$ by $1$ and repeat \Cref{Alg_NH:iterate}.
 \end{enumerate}

 In simple terms, the \NHW Algorithm above checks if $M$ is non-Hermitian by
 gradually increasing the maximum power of its trace used in the witnesses.
 Furthermore, for each trace power, it systematically tests its
 non-Hermitianness through \Cref{Thrm:non-Hermitian_witness} by first checking
 $M$ for those easy conditions that require no optimization before testing the
 validity of computationally demanding
 Inequality~\eqref{E:optimizing_non-Hermitian_witness} with $g(x) = 1$ and
 $n_i = 2$ through optimization over $c_i$'s.

 Now I study the performance of this Algorithm.  Suppose the eigenvalues of an
 ensemble of matrices are uniformly distributed over the complex plane, then
 their traces are almost surely not real.  Thus, \Cref{Alg_NH:iterate} in this
 Algorithm can almost certainly certify that they are not Hermitian.  This is
 not surprising.

 I further demonstrate the power of this Algorithm on those non-Hermitian
 matrices whose eigenvalues come in conjugate pairs.  Showing the
 non-Hermitianness for this type of matrices is much harder for all the trace
 powers are real.  In particular, it is natural to study the performance the
 \NHW Algorithm on non-Hermitian matrices whose complex conjugate pairs of
 eigenvalues are uniformly distributed on the entire complex plane.

\begin{table*}[th]
 \centering
 \begin{tabular}{c|c|rrrrrrrr}
   \toprule
   \multirow{2}{*}{$d$} & \multirow{2}{*}{witness}
    & \multicolumn{8}{c}{$\dim M$}
  \\
   \cline{3-10}
   & & 2 & 4 & 6 & 8 & 10 & 20 & 50 & 100
  \\
   \midrule
   2 & Inequality~\eqref{E:012_optimizing_non-Hermitian_witness_special_cases}
    & $100.0\%$ & $73.0\%$ & $66.8\%$ & $63.6\%$ & $61.8\%$ & $57.9\%$
    & $54.9\%$ & $53.5\%$
  \\
   \midrule
   3 & Inequality~\eqref{E:123_optimized_non-Hermitian_witness} & & $1.9\%$
    & $2.2\%$ & $2.3\%$ & $2.4\%$ & $2.5\%$ & $2.7\%$ & $2.7\%$
  \\
   \midrule
   \multirow{5}{*}{4}
   & Inequality~\eqref{E:2_optimizing_non-Hermitian_witness_special_cases}
   & & $10.8\%$ & $13.8\%$ & $15.1\%$ & $15.8\%$ & $17.7\%$ & $19.5\%$
   & $20.5\%$
  \\
   & Inequality~\eqref{E:12_non-Hermitian_witness} & & $1.5\%$ & $3.1\%$
    & $3.9\%$ & $4.3\%$ & $4.7\%$ & $4.4\%$ & $4.1\%$
  \\
   & Inequality~\eqref{E:012_optimizing_non-Hermitian_witness_special_cases}
    & & $6.2\%$ & $7.3\%$ & $7.6\%$ & $7.5\%$ & $6.8\%$ & $5.6\%$ & $4.9\%$
  \\
   & $[4,3,2]$ & & $3.6\%$ & $2.7\%$ & $2.5\%$ & $2.3\%$ & $2.9\%$
    & $4.0\%$ & $4.7\%$
  \\
   & generic & & $3.0\%$ & $1.6\%$ & $1.3\%$ & $1.2\%$ & $1.2\%$ & $1.1\%$
    & $0.9\%$
  \\
   \midrule
   \multirow{6}{*}{6}
   & Inequality~\eqref{E:2_optimizing_non-Hermitian_witness_special_cases} & &
    & $0.3\%$ & $0.7\%$ & $1.1\%$ & $2.3\%$ & $3.4\%$ & $4.0\%$
  \\
   & Inequality~\eqref{E:012_optimizing_non-Hermitian_witness_special_cases} &
    & & $0.0\%$ & $0.1\%$ & $0.1\%$ & $0.1\%$ & $0.1\%$ & $0.1\%$
  \\
   & $[6,3,2]$ & & & $0.1\%$ & $0.1\%$ & $0.2\%$ & $0.2\%$ & $0.3\%$ & $0.4\%$
  \\
   & $[6,5,2]$ & & & $0.6\%$ & $0.9\%$ & $1.1\%$ & $1.5\%$ & $1.7\%$ & $1.8\%$
  \\
   & $[6,5,4]$ & & & $0.2\%$ & $0.3\%$ & $0.2\%$ & $0.2\%$ & $0.2\%$ & $0.2\%$
  \\
   & generic & & & $1.3\%$ & $1.5\%$ & $1.7\%$ & $1.6\%$ & $1.6\%$ & $1.6\%$
  \\
   \midrule
   \multirow{10}{*}{8}
   & Inequality~\eqref{E:2_optimizing_non-Hermitian_witness_special_cases} &
    & & & $0.0\%$ & $0.0\%$ & $0.1\%$ & $0.2\%$ & $0.2\%$
  \\
   & Inequality~\eqref{E:123_optimized_non-Hermitian_witness} & & & &
    $0.0\%$ & $0.0\%$ & $0.0\%$ & $0.0\%$ & $0.0\%$
  \\
   & Inequality~\eqref{E:012_optimizing_non-Hermitian_witness_special_cases} &
    & & & $0.0\%$ & $0.0\%$ & $0.1\%$ & $0.1\%$ & $0.1\%$
  \\ & $[8,3,2]$ & & & & $0.0\%$ & $0.0\%$ & $0.0\%$ & $0.0\%$ & $0.0\%$
  \\ & $[8,5,2]$ & & & & $0.0\%$ & $0.0\%$ & $0.0\%$ & $0.0\%$ & $0.0\%$
  \\ & $[8,5,4]$ & & & & $0.0\%$ & $0.0\%$ & $0.0\%$ & $0.0\%$ & $0.0\%$
  \\ & $[8,7,2]$ & & & & $0.0\%$ & $0.0\%$ & $0.1\%$ & $0.1\%$ & $0.1\%$
  \\ & $[8,7,4]$ & & & & $0.0\%$ & $0.0\%$ & $0.0\%$ & $0.0\%$ & $0.0\%$
  \\ & $[8,7,6]$ & & & & $0.0\%$ & $0.0\%$ & $0.0\%$ & $0.0\%$ & $0.0\%$
  \\ & generic & & & & $0.1\%$ & $0.1\%$ & $0.1\%$ & $0.1\%$ & $0.1\%$
  \\
   \midrule
   \multicolumn{2}{c|}{\ding{55}} & & & & & $0.0\%$ & $0.0\%$ & $0.0\%$
    & $0.0\%$
  \\
   \bottomrule
 \end{tabular}
 \caption{Performance of the \NHW Algorithm.  The eigenvalue distribution used
  are stated in the text.  Statistics are gathered using $10^6$ randomly
  generated samples for each Hilbert space dimension of $M$.  In the second
  column, ``generic'' means using \Cref{Alg_NH:generic} in the \NHW Algorithm,
  whereas $[\ell,m,n]$ refers to using
  Inequality~\eqref{E:lmn_optimizing_non-Hermitian_witness_special_cases} with
  parameters $\ell,m$ and $n$.  The witnesses here are listed in the actual
  order of the test used in my simulation.  Entries in each column are the
  break down of the successful probabilities of determining this sample of $M$
  to be non-Hermitian.  (Since the \NHW Algorithm returns a certificate once
  a valid \NHW is found, the successful probabilities listed here for a \NHW
  that is tested later may not truly reflect its power.  For instance,
  according to \Cref{Rem:powerful_witness} and data listed in this Table, the
  successful probability for using the generic \NHW with $d = 6$ for $\dim M =
  6$ is at least $1.6\% + 1.3\% = 2.9\%$.  This remark holds also for all
  other Tables in this paper.)  Error bar in each entry is about $0.1\%$.
  Here an entry of $0.0\%$ means that the observed successful probability is
  smaller than $0.1\%$ but non-zero; whereas an empty entry means that no
  successful case is observed.  Lastly, the column under \ding{55} tabulates
  the unsuccessful probabilities, namely those that fail to be certified when
  $d_{\max} = 8$.  Note that the column sum may deviate slightly from $100\%$
  due to rounding.
  \label{T:non-Hermitian_witness}
 }
\end{table*}

 The following observation is useful to generate these type of matrices.
 Suppose all $c_i$'s are scaled by a real factor $s > 0$.  Further suppose
 that $M$ is replaced by $s M$.  Then, from the derivation of
 \Cref{Thrm:Hermitian_eigens}, I know that the LHS of
 Inequality~\eqref{E:Hermitian_eigens} will be scaled by the factor $s^{\deg
 g(x) + \sum_i n_i}$.  As a consequence, all \NHW{s} in
 \Cref{Thrm:non-Hermitian_witness} are scale invariant in the sense that if
 they can certify that $M$ is non-Hermitian, then they can do so for $s M$.
 Surely, a natural choice for $s$ is the sum of magnitude square of all its
 eigenvalues.  Consequently, I could study the performance of the \NHW
 Algorithm above on non-Hermitian matrices whose complex conjugate pairs of
 eigenvalues are uniformly distributed on $\numset{C}$ by normalizing each of
 them by the sum of magnitude square of all their corresponding eigenvalues
 first.

 The results reported in \Cref{T:non-Hermitian_witness} show the effectiveness
 of this \NHW Algorithm.  (The Mathematica code used to produce all data in
 this paper is available on request.)  In fact, by fixing $d_{\max} = 8$, the
 observed unsuccessful probability gradually increases with $\dim M$ and is
 equal to about $0.006\%$ (which is non-zero though the uncertainty is huge)
 when $\dim M = 100$.  For these failure cases, the LHS of
 Inequality~\eqref{E:optimizing_non-Hermitian_witness} decreases from about
 $10^{-4}$ for $\dim M = 10$ to about $3\times 10^{-7}$ when $\dim M = 100$.
 That is, all these failure cases just slightly miss the bar of being a valid
 certificate their non-Hermitianness.  Moreover, essentially all the fail
 cases in the simulation are the ones whose arguments of their dominant
 eigenvalues (in the sense that their magnitudes to the $d_{\max}$th power are
 much larger than the rest) are close to either $0$ or $\pm\pi$.  This can be
 understood as follows.  The contribution of non-dominant eigenvalues to $\Tr
 M^j$ diminishes as $j$ increases.  In other words, $\Tr M^j$ is closer and
 closer to that of the Hermitian matrix whose eigenvalues equal the real part
 of the dominant eigenvalues of $M$ as $j$ increases until $j$ is large
 enough for the imaginary parts of the dominant eigenvalues of $M$ to start
 contributing non-negligibly to $\Tr M^j$.  Let me denote this large enough
 $j$ by $j_c$.  Suppose the eigenvalues of $M$ are distributed in such a way
 that all the \NHW{s} used in the above Algorithm involving traces up to $d$th
 power fail for $d < j_c$.  Then, \NHW{s} involving traces up to $d'$th power
 obeying $d < d' < j_c$ should fail, too.  Obviously, the closer the dominant
 eigenvalues of $M$ are to the real line, the higher the value of $j_c$ is.
 Two comments are in order.  First, a much rarer failure case is when the
 arguments of the dominant eigenvalues are approximately equally spaced with
 spacing interval $\approx 2\pi/q$ where $q$ is an integer greater than
 $d_{\max}$.  (This is a straightforward extension of the construction
 reported earlier in \Cref{Subsec:witness_performance}.  In this case, all
 \NHW{s} involving trace up to $j_c$th power of $M$ fail for they are unable
 to distinguish this $M$ with the zero matrix.)  Since the eigenvalues are
 uniformly distributed in $\numset{C}$ before normalization, the occurrence
 probability is extremely small.  Second, by the same token, one of the
 hardest case to detect is when the dominant eigenvalues are all real although
 its occurrence probability is almost surely zero.

 Two remarks are in place.  First, because the \NHW Algorithm stops once a
 valid non-Hermitian certification is found, the successful probability
 distribution in \Cref{T:non-Hermitian_witness} has to be interpreted
 carefully.  To be precise, probabilities listed in later rows of the Table
 are conditional probabilities given that all tests listed in the earlier part
 of the Table fail.  That is to say, a small probability shown in the lower
 part of the Table need not imply the ineffectiveness of the corresponding
 witness.  This remark applies to all subsequent Tables in this paper.
 Second, for a fixed $\dim M$, the non-Hermitianness of up to at least
 $97.0\%$ of the cases in \Cref{T:non-Hermitian_witness} can be certified
 using witnesses without requiring explicit optimization of $c_i$'s.  In other
 words, the average run time is extremely fast.

 The absence of entries with odd $d \ge 5$ in \Cref{T:non-Hermitian_witness}
 means that those witnesses are ineffective for the distribution of
 non-Hermitian matrices described earlier in this Subsubsection.  From the
 \NHW Algorithm, these \NHW{s} come from
 Inequality~\eqref{E:123_optimized_non-Hermitian_witness} with $m > 1$.
 \Cref{T:non-Hermitian_witness} also reveals that for a fixed $d_{\max}$, the
 successful probability for using $\Tr M^j$ with $j \le d_{\max}$ gradually
 decreases as $\dim M$ increases.  The reason is that $\Tr M^j \to 0$ for any
 fixed $j\in \numset{Z}^+$ as $\dim M \to \infty$ since eigenvalues are
 uniformly distributed on $\numset{C}$.  Furthermore, as $\dim M$ increases,
 the \NHW{s} corresponding to $d=2$ become less effective while those
 corresponding to $d>2$ are generally getting more effective.  Besides, the
 larger the dimension of $M$, the greater the value of $d_{\max}$ is required
 to attain the same successful probability.  These observations are within
 expectation.  \Cref{T:non-Hermitian_witness} also shows that
 Inequality~\eqref{E:012_optimizing_non-Hermitian_witness_special_cases} is
 the most powerful \NHW for $d = 2$; and
 Inequality~\eqref{E:2_optimizing_non-Hermitian_witness_special_cases} is the
 most effective one for $d = 4$.  These two Inequalities combined are already
 able to certify at least $74.0\%$ of the cases up to $\dim M = 100$.  As for
 cases that require $d \ge 6$, the optimized witness in \Cref{Alg_NH:generic}
 works better than the rest.  Other than these, there is no obvious trend or
 pattern for the effectiveness of the \NHW{s} used.

\subsubsection{Non-Positive Witness}
\label{Subsubsec:non-positive_witness_performance}
 Likewise, it is instructive to study the performance of \NPW and \UTNPW over
 a ``uniformly distributed'' sample of Hermitian matrices.  Although there is
 no natural way to define a uniformly distribution in this context, I could
 proceed as follows.  Using a similar argument in
 \Cref{Subsubsec:non-Hermitian_witness_performance}, I know that the
 \UTNPW{s} in \Cref{Thrm:non-positive_witness_for_unit-trace_Hermitian}
 together with the \NPW{s} in
 \Cref{Thrm:non-positive_witness_for_general_Hermitian} are scale invariant.
 Combined with the fact that these witnesses depend only on the eigenvalues
 rather than eigenvectors of $H$, I may define  ``uniformly distributed''
 Hermitian matrices in this context by normalizing \iid eigenvalues with an
 overall normalization factor.  More precisely, I use the following method to
 generate random non-positive Hermitian matrices.  First, $D \equiv \dim H$
 statistically independent uniformly distributed real random numbers
 $\lambda_j$'s in $[-1,1]$ are drawn.  These $D$ numbers are then normalized
 in the $\ell_2$-norm.  That is, I demand that $\sum_j \lambda_j^2 = 1$
 through scaling.  The resultant $\lambda_j$'s are the eigenvalues of the
 randomly generated Hermitian matrix $H$.  Finally, only those with at least
 one negative eigenvalue are kept.

 \par\bigskip\noindent
 [Non-Positivity Witness (\UTNPW) Algorithm For Unit Trace Hermitian Matrix]
 \begin{enumerate}
  \item Input a unit trace Hermitian operator or matrix $H$ and the maximum
   degree $d_{\max}$ for the polynomial $G_{NN}$ to be used in
   \Cref{Thrm:positive_eigens_for_unit_trace_H}.  Set $d = 2$ and $m_1 = 1$.
  \item Output that $H$ is non-positive if any one of the inequalities below
   is true:
   \label{Alg_NP:iterate}
   \begin{enumerate}
    \item Inequality~\eqref{E:12_non-positive_witness_for_unit-trace_Hermitian}
     where $d = 2m$ with $m$ being an odd integer;
   non-positive.
    \item Inequality~\eqref{E:optimizing_non-positive_witness_for_general_Hermitian_special_cases}
     with $d = m$ and $m$ is an odd integer, or with $d = 2(q+1)$ and $q \in
     \numset{Z}^+$, or with $d = m > n > t$ where $m$ and $t$ are odd positive
     integers and $n$ is even; or
    \item Inequality~\eqref{E:optimizing_non-positive_witness_for_unit-trace_Hermitian}
     for $d = m_1 + m_2 + \sum_i n_i$ with $g(x) = 1$ and $n_i = 2$ for all $i$
     where $m_2 = 0$ or $1$.  (To speed things up, this should be the last
     condition to test.  Moreover, one could stop once a set of $c_i$'s that
     obeying
     Inequality~\eqref{E:optimizing_non-positive_witness_for_unit-trace_Hermitian}
     is found rather than finding the optimal set of $c_i$'s.  In addition,
     for the special case of $d = 4$ and $m_2 = 1$, one may simplify
     Inequality~\eqref{E:optimizing_non-positive_witness_for_unit-trace_Hermitian}
     to
     Inequality~\eqref{E:reduced_optimizing_non-positive_witness_for_unit-trace_Hermitian}.)
     \label{Alg_NP:generic}
   \end{enumerate}
  \item If $d > d_{\max}$, then output that the nature of $H$ is inconclusive.
   Otherwise, increase $d$ by $1$ and repeat \Cref{Alg_NP:iterate}.
 \end{enumerate}

\begin{table*}[th]
 \centering
 \begin{tabular}{c|c|rrrrrrrrr}
   \toprule
   \multirow{2}{*}{$d$} & \multirow{2}{*}{witness}
    & \multicolumn{9}{c}{$\dim H$}
  \\
   \cline{3-11}
   & & 2 & 3 & 4 & 5 & 6 & 7 & 10 & 50 & 100
  \\
   \midrule
   2 & $[2,1,0]_1$ & $100.0\%$ & $79.5\%$ & $72.2\%$ & $68.8\%$ & $67.4\%$
    & $66.8\%$ & $66.7\%$ & $68.0\%$ & $68.1\%$
  \\
   \midrule
   \multirow{2}{*}{3}
    & Inequality~\eqref{E:m_optimizing_non-positive_witness_for_general_Hermitian_special_cases}
    & & & & & $0.0\%$ & $0.1\%$ & $0.1\%$ & $0.1\%$ & $0.1\%$
  \\
   & $[3,2,1]_0$ & & $16.3\%$ & $23.0\%$ & $26.9\%$ & $29.3\%$ & $30.6\%$
    & $32.3\%$ & $31.9\%$ & $31.8\%$
  \\
   \midrule
   4 & $\text{generic}_1$ & & $1.4\%$ & $1.3\%$ & $1.0\%$ & $0.7\%$ & $0.5\%$
    & $0.1\%$ & &
  \\
   \midrule
   \multirow{2}{*}{5} & $[5,4,3]_0$ & & & $0.0\%$ & $0.0\%$ & $0.0\%$
    & $0.0\%$ & $0.0\%$ & &
  \\
   & $\text{generic}_0$ & & $2.8\%$ & $2.9\%$ & $2.6\%$ & $2.1\%$ & $1.5\%$
    & $0.6\%$ & &
  \\
   \midrule
   \multirow{2}{*}{6}
    & Inequality~\eqref{E:12_non-positive_witness_for_unit-trace_Hermitian} &
    & & $0.0\%$ & $0.0\%$ & $0.0\%$ & & & &
  \\
   & $\text{generic}_1$ & & & $0.1\%$ & $0.1\%$ & $0.1\%$ & $0.1\%$ & $0.0\%$
    & &
  \\
   \midrule
   7 & $\text{generic}_0$ & & & $0.6\%$ & $0.5\%$ & $0.4\%$ & $0.3\%$
    & $0.1\%$ & &
  \\
   \midrule
   8 & $\text{generic}_1$ & & & & $0.0\%$ & $0.0\%$ & $0.0\%$ & $0.0\%$ & &
  \\
   \midrule
   \multicolumn{2}{c|}{\ding{55}} & & & & $0.1\%$ & $0.2\%$ & $0.1\%$
    & $0.1\%$ &
  \\
   \bottomrule
 \end{tabular}
 \caption{Performance of the \UTNPW Algorithm on unit trace non-positive
  Hermitian matrices.  The eigenvalue distribution used is stated in the text.
  In the second column, ``$\text{generic}_{m_2}$'' means using
  \Cref{Alg_NP:generic} in the \UTNPW Algorithm; whereas $[m,n,t]_{m_2}$
  refers to using
  Inequality~\eqref{E:mnt_optimizing_non-positive_witness_for_general_Hermitian_special_cases}
  with parameters $m,n$ and $t$.  In both cases, subscript $m_2$ refers to the
  exponent of the $(1-x)$ factor used in $G_{NN}(x)$.  Other parameters and
  conventions used here are the same as those in
  \Cref{T:non-Hermitian_witness}.
  \label{T:unit-trace_non-positive_witness}
 }
\end{table*}

\begin{table*}[th]
 \centering
 \begin{tabular}{c|c|rrrrrrrrr}
   \toprule
   \multirow{2}{*}{$d$} & \multirow{2}{*}{witness}
    & \multicolumn{9}{c}{$\dim H$}
  \\
   \cline{3-11}
   & & 2 & 3 & 4 & 5 & 6 & 7 & 10 & 50 & 100
  \\
   \midrule
   1
    & Inequality~\eqref{E:m_optimizing_non-positive_witness_for_general_Hermitian_special_cases}
    & $50.0\%$ & $50.0\%$ & $50.0\%$ & $50.0\%$ & $50.0\%$ & $50.0\%$
    & $50.0\%$ & $50.0\%$ & $50.0\%$
  \\
   \midrule
   2
    & Inequality~\eqref{E:12_optimizing_non-positive_witness_for_general_Hermitian}
    & $50.0\%$ & $39.8\%$ & $36.2\%$ & $34.4\%$ & $33.7\%$ & $33.4\%$
    & $33.3\%$ & $33.9\%$ & $34.1\%$
  \\
   \midrule
   3 & $[3,2,1]$ & & $8.1\%$ & $11.5\%$ & $13.5\%$ & $14.6\%$ & $15.3\%$
    & $16.3\%$ & $16.0\%$ & $15.9\%$
  \\
   \midrule
   \multirow{2}{*}{5} & $[5,4,3]$ & & & $0.0\%$ & $0.0\%$ & $0.0\%$ & $0.0\%$
    & $0.0\%$ & &
  \\
   & generic & & $2.1\%$ & $2.1\%$ & $1.7\%$ & $1.4\%$ & $1.0\%$ & $0.4\%$ & &
  \\
   \midrule
   7 & generic & & & $0.3\%$ & $0.3\%$ & $0.2\%$ & $0.2\%$ & $0.1\%$ & &
  \\
   \midrule
   \multicolumn{2}{c|}{\ding{55}} & & & & $0.1\%$ & $0.1\%$ & $0.1\%$
    & $0.0\%$ &
  \\
   \bottomrule
 \end{tabular}
 \caption{Performance of the \NPW Algorithm on general non-positive Hermitian
  matrices.  The eigenvalue distribution used is stated in the text.
  In the second column, ``generic'' means using \Cref{Alg_NPG:generic} in the
  \NPW Algorithm; whereas $[m,n,t]$ refers to using
  Inequality~\eqref{E:mnt_optimizing_non-positive_witness_for_general_Hermitian_special_cases}
  with parameters $m,n$ and $t$.  Other parameters and conventions used here
  are the same as those in \Cref{T:non-Hermitian_witness}.
  \label{T:general_non-positive_witness}
 }
\end{table*}

 \par\bigskip\noindent
 [Non-Positivity Witness (\NPW) Algorithm For General Hermitian Matrix]
 \begin{enumerate}
  \item Input a Hermitian operator or matrix $H$ and the maximum degree
   $d_{\max}$ for the polynomial $G'_{NN}$ to be used in
   \Cref{Cor:positive_eigens}.  Set $d = 1$ and $m_1 = 1$.
  \item Output that $H$ is non-positive if any one of the inequalities below
   is true:
   \label{Alg_NPG:iterate}
   \begin{enumerate}
    \item Inequality~\eqref{E:12_optimizing_non-positive_witness_for_general_Hermitian}
     where $d = 2m$ with $m$ being an odd integer;
    \item Inequality~\eqref{E:optimizing_non-positive_witness_for_general_Hermitian_special_cases}
     with $d = m$ and $m$ is an odd integer, or with $d = 2(q+1)$ and $q \in
     \numset{Z}^+$, or with $d = m > n > t$ where $m$ and $t$ are odd positive
     integers and $n$ is even; or
    \item Inequality~\eqref{E:optimizing_non-positive_witness_for_general_Hermitian}
     for $d = m_1 + \sum_i n_i$ with $g(x) = 1$ and $n_i = 2$ for all $i$.
     (To speed things up, this should be the last condition to test.
     Moreover, one could stop once a set of $c_i$'s that obeying
     Inequality~\eqref{E:optimizing_non-positive_witness_for_general_Hermitian}
     is found rather than finding the optimal set of $c_i$'s.)
     \label{Alg_NPG:generic}
   \end{enumerate}
  \item If $d > d_{\max}$, then output that the nature of $H$ is inconclusive.
   Otherwise, increase $d$ by $1$ and repeat \Cref{Alg_NPG:iterate}.
 \end{enumerate}

 The successful detection probabilities of the \NPW{s} for unit trace and
 general non-positive Hermitian matrices of various dimensions generated by
 the above methods are listed in
 \Cref{T:unit-trace_non-positive_witness,T:general_non-positive_witness},
 respectively.  As shown in both Tables, using $d_{\max} = 3$ is enough to
 certify all cases in the $10^6$ sample non-positive Hermitian matrices when
 $\dim H \gg 10$.  Besides, the corresponding \UTNPW{s} or \NPW{s} used do not
 need to be explicitly optimized over $c_j$'s and hence are extremely
 efficient to compute.  Note that for both the \UTNPW and \NPW, the successful
 probabilities of witnesses corresponding to $d \ge 3$ appear to saturate as
 $\dim H \to \infty$.  In addition,
 \Cref{T:unit-trace_non-positive_witness,T:general_non-positive_witness} show
 that the hardest cases to handle are the ones with $\dim H$ between $5$ to
 about $10$.  In these cases, using $d$ up to $3$ is sufficient to correctly
 certify non-positivity $96.5\%$ of the time.  Besides, by using up to trace
 of the 7th or the 8th power of the matrix, the unsuccessful non-positiveness
 detection rates are about $0.2\%$ at most.  For these failure cases of the
 \UTNPW Algorithm, the LHS of
 Inequality~\eqref{E:optimizing_non-positive_witness_for_unit-trace_Hermitian}
 is less than or equal to about $8\times 10^{-7}$; whereas for those of the
 \NPW Algorithm, the LHS of
 Inequality~\eqref{E:optimizing_non-positive_witness_for_general_Hermitian} is
 less than or equal to about $10^{-4}$.  In other words, they are just
 slightly off from being valid witnesses.  Actually, almost all failure cases
 contains all but one positive eigenvalues.  (Very occasionally, two negative
 eigenvalues are found in the failure cases.)  Besides, the magnitude of their
 only negative eigenvalue is very small in comparison to those of the positive
 eigenvalues.  The reason is simple --- $\Tr H^j$ is approximately equal to
 that of $\Tr \bar{H}^j$ for all $j \in \numset{Z}^+$ where $\bar{H}$ denotes
 the non-negative Hermitian matrix with all the negative eigenvalues of $H$
 replaced by $0$.  Their closeness in trace powers make it difficult to
 certify the non-positiveness of $H$.

 To understand the structure of the fail cases, from the proof technique of
 \Cref{Thrm:Hermitian_eigens}, the LHS of
 Inequality~\eqref{E:non-positive_witness_for_unit-trace_Hermitian} equals
 $\sum_j G_{NN}(\lambda_j)$ before optimization over $c_i$'s where
 $\lambda_j$'s are the eigenvalues of $H$.  By the construction of $G_{NN}(x)$
 in \Cref{E:G_NN_def}, $G_{NN}(x) < 0$ if $x < 0$.  Therefore, the hardest
 case to detect non-positivity using
 Inequality~\eqref{E:non-positive_witness_for_unit-trace_Hermitian} is the
 one having only a single negative eigenvalue, say, $\lambda_1$.  Consider the
 polynomial $G_{NN}(x)$.  Since Runge's phenomenon generally shows up in a
 polynomial with degree $\gtrsim 4$, one expects that for a general set of
 $c_i$'s (and hence also the optimized set), $G_{NN}(x)$ fluctuates widely
 around $0$ when $\sum_i n_i \gtrsim 4$.  Besides, the smaller the value of
 $m_1$ or the larger the values of $n_i$'s, the larger the fluctuation.
 Provided that this fluctuation is sufficiently negative around $x = 0$, one
 obtains a valid \UTNPW.  In other words, Runge's phenomenon is exploited to
 construct a powerful witness.  Clearly, the amplitude of the fluctuation is
 large when $m_1$ is small and $\lambda_1$ is close to $0$.  That is why
 Inequality~\eqref{E:non-positive_witness_for_unit-trace_Hermitian} is a
 powerful \UTNPW when $m_1 = 1$ and $d \gtrsim 5$.  Furthermore, it explains
 why the hardest case to deal with is the one with only one negative
 eigenvalue whose magnitude is small compared to those of the positive
 eigenvalues.  Using the same argument, I expect that the \NHW{s}, \NPW{s} and
 \EW{s} reported here are effective when $d \gtrsim 4$.  These claims are
 confirmed by simulation results reported in this Section.  In particular, by
 setting $d_{\max} = 8$, the successful probabilities are at least $99.8\%$ in
 all cases that I have tested.

\begin{table}[th]
 \centering
 \begin{tabular}{c|c|rrrrrrr}
   \toprule
   \multirow{2}{*}{$d$} & \multirow{2}{*}{witness}
    & \multicolumn{7}{c}{$D \times D$}
  \\
   \cline{3-9}
   & & $2\times 2$ & $3\times 3$ & $4\times 4$ & $5\times 5$ & $6\times 6$
    & $7\times 7$ & $8\times 8$
  \\
   \midrule
   3 & $[3,2,1]_0$ & $29.1\%$ & $25.3\%$ & $23.2\%$ & $21.9\%$ & $21.0\%$
    & $20.4\%$ & $19.9\%$
  \\
   \midrule
   4 & $\text{generic}_1$ & $13.8\%$ & $14.4\%$ & $14.8\%$ & $14.8\%$
    & $14.7\%$ & $14.6\%$ & $14.5\%$
  \\
   \midrule
   5 & $\text{generic}_0$ & $43.0\%$ & $58.0\%$ & $62.0\%$ & $63.3\%$
    & $64.3\%$ & $64.9\%$ & $65.6\%$
  \\
   \midrule
   6 & $\text{generic}_1$ & $1.9\%$ & $0.6\%$ & $0.0\%$ & & & &
  \\
   \midrule
   7 & $\text{generic}_0$ & $12.2\%$ & $1.8\%$ & $0.0\%$ & & & &
  \\
   \midrule
   8 & $\text{generic}_1$ & & $0.0\%$ & & & & &
  \\
   \midrule
   \multicolumn{2}{c|}{\ding{55}} & & $0.2\%$ & & & & &
  \\
   \bottomrule
 \end{tabular}
 \caption{Performance of the \EW Algorithm on density matrices that whose
  \PPT{s} are non-positive.  The generation method of these density matrices
  is stated in the text.  In the second column,
  ``$\text{generic}_{m_2}$'' means using
  Inequality~\eqref{E:optimizing_entanglement_witness}, whereas
  $[m,n,t]_{m_2}$ refers to the values of $m,n$ and $t$ used in
  Inequality~\eqref{E:mnt_optimizing_non-positive_witness_for_general_Hermitian_special_cases}.
  In both cases, subscript $m_2$ refers to the exponent of the $(1-x)$ factor
  used in $G_{NN}(x)$.  The parameters and other conventions used are the same
  as those in \Cref{T:non-Hermitian_witness}.
  \label{T:entanglement_witness}
 }
\end{table}

 \Cref{T:unit-trace_non-positive_witness} shows that for the \UTNPW{s} derived
 from $G_{NN}(x)$ with $m_1 = 1$, the value of $m_2$ must equal $(d+1) \bmod
 2$.  This is a simple consequence of the degree counting of the polynomial
 $G_{NN}(x)$ used.  Interestingly, \UTNPW{s} corresponding to $m_1 = 0$ and
 $m_2 = 1$ are ineffective.  Precisely, it means that whenever a \UTNPW
 corresponding to $m_1 = 0$ and $m_2 = 1$ can certify the non-positivity of a
 Hermitian matrix, so does the \UTNPW corresponding to $m_1 = 1$ and $m_2 =
 0$.  Unfortunately, I do not have a good explanation.  From
 \Cref{T:unit-trace_non-positive_witness,T:general_non-positive_witness},
 Inequality~\eqref{E:mnt_optimizing_non-positive_witness_for_general_Hermitian_special_cases}
 is a useful witness when $d = 2,3,5$ and $d = 3,5$, respectively.  Moreover,
 Inequality~\eqref{E:1q2q_optimizing_non-positive_witness_for_general_Hermitian_special_cases}
 is not useful at all.  The successful probability of the \NPW using
 Inequality~\eqref{E:m_optimizing_non-positive_witness_for_general_Hermitian_special_cases}
 for $d = 1$ (that is, $\Tr H < 0$) equals $50\%$.  This is due to the fact
 that there is an equal chance to produce a non-positive Hermitian matrix of
 positive or negative trace using the construction described in this
 Subsubsection.  In contrast, as shown in
 \Cref{T:unit-trace_non-positive_witness,T:general_non-positive_witness},
 Inequality~\eqref{E:m_optimizing_non-positive_witness_for_general_Hermitian_special_cases}
 is no longer an effective witness when $d > 3$ in the \UTNPW Algorithm and
 $d > 1$ in the \NPW Algorithm, respectively.  Interestingly,
 \Cref{T:general_non-positive_witness} depicts that all even-numbered $d > 2$
 are not useful in obtaining better \NPW witnesses in the \NPW Algorithm for
 the non-positive Hermitian matrices generated using the method described
 earlier in this Subsubsection.  This means that
 Inequalities~\eqref{E:12_optimizing_non-positive_witness_for_general_Hermitian}
 and~\eqref{E:1q2q_optimizing_non-positive_witness_for_general_Hermitian_special_cases}
 are not powerful \NPW{s} for even-numbered $d \ge 4$.  In contrast,
 Inequality~\eqref{E:12_optimizing_non-positive_witness_for_general_Hermitian}
 is an extremely effective \NPW when $d = 2$.

\subsubsection{Entanglement Witness}
\label{Subsubsec:entanglement_witness_performance}
 The \UTNPW Algorithm is turned to the Entanglement Witness (\EW) Algorithm by
 using the \PPT of a density matrix as input.  To investigate the performance
 of this Algorithm, I first generate bipartite density matrices with uniformly
 distributed eigenvalues.  This can be done by generating a Ginibre ensemble
 in both the real and imaginary parts of each matrix element for a matrix $G$.
 In other words, every element $G_{jk}$ follows $\normaldist + i
 \,\normaldist$ where $\normaldist$ denotes the standard normal distribution.
 Moreover, these elements are statistically independent.  Then, $G G^\dag /
 \Tr ( G G^\dag )$ is the required density matrix
 $\rho$~\cite{Random_Matrices}.  The corresponding \PPT matrix $\sigma$ can be
 computed readily.  Since no non-positiveness test on $\sigma =
 \rho^\text{\PPT}$ can certify entanglement if $\sigma \ge 0$, it makes more
 sense here to gauge the performance of this \EW Algorithm in certifying
 entanglement for only those whose $\sigma$ is non-positive.  In fact, this
 was the performance evaluation approach adopted in, for example,
 Refs.~\cite{EKetal20,YIG21}.

 The results for $D\times D$ density matrices with non-positive \PPT{s} are
 tabulated in \Cref{T:entanglement_witness}.  It shows that for $D \ge 5$,
 using $d_{\max} = 5$ is already sufficient to the certify entanglement for
 those density matrices with non-positive \PPT{s}.  The hardest case is for $D
 = 3$ in which the successful probability is about $99.8\%$, which is better
 than $99.0\%$ success rate method reported in Ref.~\cite{YIG21}.  (For all
 the other cases investigated, these two methods are equally good.)  Just like
 the \UTNPW Algorithm situation, the failure cases are the ones with $\sigma$
 having exactly one negative eigenvalue.  Furthermore, the LHS of
 Inequality~\eqref{E:optimizing_entanglement_witness} is about $\lesssim 2
 \times 10^{-7}$ which is very close to be a valid certification.

 Interestingly, all useful \EW{s} originate
 from Inequality~\eqref{E:optimizing_entanglement_witness}.  Therefore,
 certifying entanglement through the \EW Algorithm is computationally
 demanding.  And just like the situation of \UTNPW, the \EW corresponding to
 $m_1 = 0$ and $m_2 = 1$ is ineffective.

 Finally, there is a word of caution.  The differences between
 \Cref{T:unit-trace_non-positive_witness} and \Cref{T:entanglement_witness} 
 imply that performance of various witness algorithms depends strongly on the
 eigenvalue distribution of the sample used.  Thus, one should never skip some
 of the witness tests in all the Algorithms mentioned in this Section if the
 eigenvalue distribution of the input is not known.

\subsection{Practical Numerical Considerations}
\label{Subsec:numerical_issue_performance}
 Using the same analysis and notations in
 \Cref{Subsec:trace_powers_numerics}, one can safely certify entanglement of a
 density matrix $\rho$ (whose \PPT is $\sigma$) through the witness in
 Inequality~\eqref{E:optimizing_entanglement_witness} if its LHS equals
 $-\epsilon$ with $\epsilon > 0$ obeying
\begin{equation}
 \sum_j |b_j| ( 2u |\Tr \sigma^j| + |\delta_j|) < \epsilon .
 \label{E:forward_error_EW}
\end{equation}
 Let me consider the most difficult cases that can be certified by
 \Cref{Subsubsec:entanglement_witness_performance}.  They are the ones for
 $\dim H = 3$, $d = 8$, $c_i$'s range from about $0.06$ to $0.30$, the
 smallest eigenvalue $\approx -10^{-2}$, and LHS of
 Inequality~\eqref{E:optimizing_entanglement_witness} ranges from $-10^{-9}$
 to $-2\times 10^{-8}$.  In most experiments, one expects that the measurement
 uncertainty $|\delta_j| \sim \delta |\Tr H^j|$.  Using typical values of
 $c_i$'s and $\lambda_j$'s obtained in simulation, I arrive at $\sum_j |b_j|
 (2u |\Tr \sigma^j| + |\delta_j|) = \sum_j |b_j| \,|\Tr \sigma^j| (2u +
 \delta) \approx \delta \sum_j |b_j| \,|\Tr \sigma^j| \approx 5\times
 10^{-3}$.  Hence, \Cref{E:forward_error_EW} demands that the certification is
 sound if $\delta \lesssim 200\epsilon = 2 \times 10^{-7}$.  This accuracy
 requires a lot of independent sample runs and is barely within reach to date.
 To be fair, this is not completely surprising as distinguishing an
 eigenvalue of order of $-10^{-2}$ from $0$ for a trace one matrix in an
 experiment is never easy.  Using the same argument, certifying those
 difficult cases for non-Hermitianness or non-positiveness reported here is
 hard but not undoable in practice.

\section{Conclusions}
\label{Sec:conclusion}
 In summary, I have conducted a systematic study of constraining eigenvalues
 of Hermitian matrices through trace of their powers plus the associated \NHW,
 \UTNPW, \NPW and \EW.  Their strengths and weaknesses have also been
 analyzed.  Hard to certify cases have been classified and witness
 certification in the presence of rounding error and measurement uncertainty
 have been discussed.  It is instructive to study the possibility of extending
 this method to cover other kind of classifications.  One possibility is the
 so-called imaginary witness~\cite{imag_witness}.

\begin{acknowledgments}
 Helpful discussions with Wenyuan (Mike) Wang is gratefully acknowledged.
\end{acknowledgments}

\bibliographystyle{apsrev4-2}

\bibliography{qc84.1}

\end{document}